\newcommand{\AlgoResetCount}{\renewcommand{\@ResetCounterIfNeeded}{\setcounter{AlgoLine}{0}}}
\newcommand{\AlgoNoResetCount}{\renewcommand{\@ResetCounterIfNeeded}{}}
\newcounter{AlgoSavedLineCount}
\newcommand{\E}{\mathsf{E}}
\providecommand{\eps}{\varepsilon}
\renewcommand{\P}{\operatorname{P}}
 \newcommand{\N}{\mathbb N}
\newcommand{\R}{\mathbb R}
\newcommand{\Id}{\mathrm{I}}
\newcommand{\CGMY}{\text{I}}
\newcommand{\In}{\mathds{1}}
\newcommand{\T}{\top}
\newcommand{\sech}{\operatorname{sech}}
\newtheorem{proposition}{Proposition}
\newtheorem{corollary}{Corollary}
\newtheorem{remark}{Remark}
\newtheorem{example}{Example}
\theoremstyle{definition}
\begin{document}

\RestyleAlgo{ruled}
\SetAlgoVlined

\begin{frontmatter}



\title{Fourier transform MCMC, heavy tailed distributions, and geometric ergodicity\tnoteref{t1}}

\author[ude,hse]{Denis Belomestny}
\author[hse]{Leonid Iosipoi}
\address[ude]{Duisburg-Essen University, Germany} 
\address[hse]{National Research University Higher School of Economics, Russia}

\tnotetext[t1]{This paper was prepared within the framework of the HSE University Basic Research Program and funded by the Russian Academic Excellence Project ``5-100".}

\begin{abstract}
Markov Chain Monte Carlo methods become increasingly popular in applied mathematics as a tool for numerical integration with respect to complex and high-dimensional distributions.  However, application of   MCMC methods to heavy tailed distributions and distributions with analytically intractable densities turns out to be rather problematic. In this paper, we propose a novel approach towards the use of MCMC algorithms for distributions with analytically known Fourier transforms and, in particular, heavy tailed distributions. The main idea of the proposed approach is to use MCMC methods in Fourier domain to sample from a density proportional to the absolute value of the underlying characteristic function. 
A subsequent application of the Parseval's formula leads to an efficient algorithm for the computation of integrals with respect to the underlying density.
We show that the resulting Markov chain in  Fourier domain may be geometrically ergodic  even in the case of heavy tailed original distributions.   
We illustrate our approach by several numerical examples including multivariate elliptically contoured stable distributions.
\end{abstract}

\begin{keyword}
Numerical integration, Markov Chain Monte Carlo, Heavy-tailed distributions. 
\end{keyword}

\end{frontmatter}

\section*{Introduction}

Nowadays Markov Chain Monte Carlo (MCMC) methods have become an important tool in machine learning and statistics, see, for instance, \cite{MR2933059, MR2238833,MR2260716}.  MCMC techniques are often applied to solve integration and optimisation problems in high dimensional spaces. The idea behind MCMC is to run a Markov chain with invariant distribution equal (approximately) to a desired distribution \(\pi\)  and then to use ergodic averages to approximate integrals with respect to \(\pi\). This approach, although computationally expensive in lower dimensions, is extremely efficient in higher dimensions. In fact, most of MCMC algorithms require  knowledge of the underlying density albeit up to a normalising  constant. However  the density of \(\pi\) might not be  available in a closed form as in applications related to stable-like multidimensional distributions,  elliptical distributions, and infinite divisible distributions. The latter class includes the  marginal distributions of L\'evy and related processes which are widely used in finance and econometrics, see e.g. \cite{tankov2003financial}, \cite{nolan2013}, and \cite{BCGMR}. 
In the above situations, it is often the case that the
Fourier transform of the target distribution is known in a closed form but the density of this distribution is intractable.  
The aim of this work is to develop a novel  MCMC methodology 
for computing integral functionals with respect to such distributions. 
Compared with existing methods, see e.g.  \cite{MR2752708},
our method avoids time-consuming numerical Fourier inversion and can be applied
effectively to high dimensional problems.  The idea of the proposed approach consists in using MCMC to sample from a distribution proportional to the absolute value of the  Fourier transform of \(\pi\) and then using the Parseval's formula to compute expectations with respect to \(\pi\). It turns out that  the resulting Markov chain can possess such nice property  as geometric ergodicity even in the case of heavy tailed  distributions \(\pi\) where the standard MCMC methods often fail to be geometrically ergodic. As a matter of fact, geometric ergodicity  plays a crucial role for concentration of ergodic averages around the corresponding expectation. 
\par
The structure of the paper is as follows. In Section~\ref{general} 
we define our framework. Section~\ref{sec:alg} contains description of the proposed methodology. In Section~\ref{sec:ergodicity} we study geometric ergodicity of the proposed MCMC algorithms. In Section~\ref{sec:examples} we apply the results from Section~\ref{sec:ergodicity} to elliptically contoured stable distributions and symmetric infinitely divisible distributions. In Section~\ref{sec:num} a thorough numerical study of MCMC algorithms in Fourier domain is presented. The paper is concluded by Section~\ref{sec:disc}.
\section{General framework}
\label{general}
Let \(g\) be a real-valued function on \(\mathbb{R}^d\) and let \(\pi\) be a bounded probability density
on \(\mathbb{R}^d\). 
By a slight abuse of notation, we will use the same letter for a distribution and its probability density, but it will cause no confusion.
Our aim is to compute the expectation of \(g\) with respect to \(\pi\), that is,
\begin{eqnarray*}
V:=\E_{\pi}[g]=\int_{\mathbb{R}^d} g(x)\pi(x)\,dx.
\end{eqnarray*}
Suppose that the density  $\pi$ is analytically unknown, and we are given its Fourier transform $\mathcal{F}[\pi](u)$ instead
\[
	\mathcal{F}[\pi](u) := \int_{\mathbb{R}^d} e^{{i} \langle u,x \rangle} \pi(x) \, dx.
\]
In this case, any numerical integration algorithm in combination with 
numerical Fourier inversion for $\mathcal{F}[g](u)$ can be applied to compute $V$.
However, this approach is extremely time-consuming even in small dimensions. 
To avoid numerical Fourier inversion, one can use the well-known Parseval's theorem.
Namely, if $g \in L^1(\mathbb{R}^d)\cap L^2(\mathbb{R}^d),$
then we can write
\begin{eqnarray*}
\label{value_parseval_simple}
V=\frac{1}{{{(2\pi)^d}}}\int_{\mathbb{R}^d}\mathcal{F}[g](-u)\mathcal{F}[\pi](u)\, du.
\end{eqnarray*}
\begin{remark}
If the tails of $g$ do not decay sufficiently fast in order to guarantee that $g \in L^1(\mathbb{R}^d)\cap L^2(\mathbb{R}^d)$, one can 
use various damping functions to overcome this problem. For example, if the function $\widetilde g(x)=g(x)/(1+x^{2p})$, for some $p\in\N$, belongs to \( L^1(\mathbb{R}^d)\cap L^2(\mathbb{R}^d),\) then we have
\begin{equation*}
\label{value_parseval1}
V=\frac{1}{{{(2\pi)^d}}}\int_{\mathbb{R}^d}\mathcal{F}[\widetilde g](-u)\left(\mathcal{F}[\pi](u)+(-1)^p\frac{d^{2p}}{du^{2p}}\mathcal{F}[\pi](u)\right)\, du,
\end{equation*}
provided that \(x^{2p}\pi(x)\in L^1(\mathbb{R}^d).\)
Another possible option to damp the growth of $g$ is to multiply it by $e^{-\langle x,R \rangle}$ for some vector $R\in \mathbb{R}^d$.
The formula for $V$ in this case reads as
\begin{equation*}
\label{value_parseval2}
V=\frac{1}{{{(2\pi)^d}}}\int_{\mathbb{R}^d}\mathcal{F}[g]({i}R-u)\mathcal{F}[\pi](u-{i}R)\, du,
\end{equation*}
provided that $\pi(x)e^{\langle x,R \rangle}\in L^1(\mathbb{R}^d)$.
\end{remark}
For the sake of simplicity we assume in the sequel that  $g\in L^1(\mathbb{R}^d)\cap L^2(\mathbb{R}^d).$
If \(|\mathcal{F}[\pi](x)|\in L^1(\mathbb{R}^d)\), then it is, up to a constant, a probability density. Thus, if $\mathcal{F}[\pi](x)$ does not vanish, one can rewrite $V$ as an expectation with respect to the density \(p(x)\propto |\mathcal{F}[\pi](x)|\),
\begin{eqnarray}
\label{imp_sample}
V=\frac{C_{p}}{(2\pi)^d} \,\mathsf{E}_{X\sim p}\left[\mathcal{F}[g](-X)\frac{\mathcal{F}[\pi](X)}{|\mathcal{F}[\pi](X)|}\right],
\end{eqnarray}
where $C_{p}$ is the normalizing constant for the density $p(x)$, that is,
\[
C_{p}:=\int_{\mathbb{R}^d} |\mathcal{F}[\pi](x)|\,dx.
\]
If $p(x)$ has a simple form and there is a direct sampling algorithm for $p(x)$, 
one can use the Monte Carlo algorithm to compute $V$ using \eqref{imp_sample}.
In more sophisticated cases,  one may use rejection sampling combined with importance sampling 
strategies, see \cite{BCGMR}. However, as the dimension $d$ increases, 
it becomes harder and harder to obtain a suitable proposal distribution.
For this reason, we need to turn to  MCMC algorithms.
The development of MCMC algorithms in the Fourier domain is the main purpose of this work.

\begin{remark}
	 The formula \eqref{imp_sample} contains the normalizing constant $C_{p}$, but
	this constant can be efficiently computed in many cases. For example, if 
	$\mathcal{F}[\pi](x)$ is positive and real, 
	then the Fourier inversion theorem yields $C_{p}= (2\pi)^d \, \pi(0).$  If the value of $\pi(0)$ is not available,
	 one can use numerical Fourier inversion.
	Furthermore, $C_{p}$ can be computed using MCMC methods, see, for example, \cite{brosse2018normalizing} and references therein. 
	Note that we can compute $C_{p}$ once and then use the formula \eqref{imp_sample} for various \(g\) without recomputing \(C_{p}.\)
\end{remark} 

\section{MCMC algorithms in the Fourier domain}\label{sec:alg}
Let us describe our generic MCMC approach in the Fourier domain. 
Let $X_0,\ldots,X_{N+n}$ be a Markov chain with the invariant distribution \(p\propto |\mathcal{F}[\pi]|\).
The samples $X_0,\ldots,X_{N}$ are discarded in order to avoid starting biases. 
Here $N$ is  chosen large enough, so that the distribution of $X_{N+1},\ldots,X_{N+n}$
is close to $p$.
We will refer to $N$ as the length of the burn-in period and $n$ as
the number of effective samples. 
According to the representation \eqref{imp_sample}, we consider a weighted average estimator 
$V_{N,n}$ for \(V\) of the form
\begin{equation}\label{eq:29032018a2}
V_{n,N}=\frac{C_{p}}{(2\pi)^d}\sum_{k=N+1}^{N+n}\omega_{N,n}(k)\,\mathcal{F}[g](-X_k)\frac{\mathcal{F}[\pi](X_k)}{|\mathcal{F}[\pi](X_k)|},
\end{equation}
where $\omega_{N,n}(k)$ are (possibly non-equal) weights such that $\sum_{k=N+1}^n  \omega_{N,n}(k) =1$.
Now let us briefly describe how to produce $X_0,\ldots,X_{N+n}$ using well-known MCMC algorithms.
We will mostly focus on the Metropolis-Hastings algorithm which 
is the most popular and simple MCMC method. Many other MCMC 
algorithms can be interpreted as special cases or extensions of this algorithm.
Nevertheless, Metropolis-Hastings-type algorithms are not exhaustive. Any MCMC algorithm can be applied in this setting
and can reach better performance than the methods listed below. Consequently, the following list in no way
limits the applicability of the generic approach of the paper.

\subsection{The Metropolis-Hastings algorithm}
The \textit{Metropolis-Hastings (MH)} algorithm 
(\citet{metropolis1953equation}, \citet{h70}) proceeds as follows. 
Let $Q(x,\cdot)$ be a transition kernel of some  Markov chain and let \(q(x,y)\) be a density of \(Q,\) that is,  $Q(x, dy) \propto q(x, y) dy$.
First we set $X_0 = x_0$ for some $x_0 \in \R^d$. 
Then, given $X_k$, we generate a proposal $Y_{k+1}$ from $Q(X_n,\cdot)$. 
The Markov chain moves towards $Y_{k+1}$ with acceptance probability 
$\alpha(X_k,Y_{k+1})$, where $\alpha(x,y)  = \min\left\{  1, \frac{p(y)q(y,x)}{p(x)q(x,y)} \right\} $, 
otherwise it remains at $X_k$. The pseudo-code is shown in Algorithm \ref{alg:MH}.
\begin{center}
\begin{algorithm}[H]
 Initialize $X_0 = x_0$\;
 \For{$k = 0 $ to $N+n$}{
  Sample $u \sim \text{Uniform}[0,1]$\;
  Sample $Y_k \sim Q(X_{k},\cdot)$\;
  \eIf{$u < \alpha(X_k,Y_{k+1})$}{
   $X_{k+1} = Y_{k+1}$\;
   }{
   $X_{k+1} = X_{k}$\;
  }
 }
 Set 
 $V_{N,n}=\displaystyle\frac{C_{p}}{(2\pi)^d}\frac{1}{n}\sum_{k=N+1}^{N+n} \mathcal{F}[g](-X_k)\frac{\mathcal{F}[\pi](X_k)}{|\mathcal{F}[\pi](X_k)|}.$
 \caption{The Metropolis-Hastings algorithm in the Fourier domain}\label{alg:MH}
\end{algorithm}
\end{center}

The MH algorithm is very simple, but it requires a careful choice of the proposal
$Q$. Many MCMC algorithms arise by considering specific choices of this distribution. 
Here are several simple instances of the MH algorithm.
\begin{description}
	\item{\textit{Metropolis-Hastings Random Walk (MHRW)}.} Here the proposal density satisfies $q(x,y) = q(y-x)$ and $q(y-x) = q(x-y)$. 
	\item{\textit{Metropolis-Hastings Independence sampler (MHIS)}.} 
	Here the proposal density satisfies $q(x,y) = q(y)$, that is, $q(x,y)$ does not depend on the previous state $x$.
\end{description}
\par
The Metropolis-Hastings algorithm produces a Markov chain 
which is reversible with respect to $p(x)$, and hence $p(x)$ is a stationary distribution for this chain, see \citet{metropolis1953equation}.
\subsection{The Metropolis-Adjusted Langevin Algorithm}
The \textit{Metropolis-Adjusted Langevin algorithm (MALA)} uses   proposals related to the discretised Langevin diffusions.
The proposal kernel
$Q_k$ depends on the step $k$ and has  the form:
\[
	Q_k(x,\cdot) = \mathcal{N}\left(x + \gamma_{k+1} \nabla \log p(x),\sqrt{2\gamma_{k+1}}\, \Id_d\right),
\]
where $(\gamma_k)_{k\ge1}$ is a nonnegative sequence of time steps and $\Id_d$ is the $d\times d$ identity matrix.
The pseudo-code of the algorithm is shown in Algorithm \ref{alg:MALA}.
\begin{center}
\begin{algorithm}[H]
 Initialize $X_0 = x_0$\;
 \For{$k = 1$ to $N+n$}{
  Sample $u \sim \text{Uniform}[0,1]$\;
  Sample $Z_{k} \sim \mathcal{N}(0,1)$\;
  Sample $Y_k = X_{k} + \gamma_{k+1} \nabla \log p(X_k) + \sqrt{2\gamma_{k+1}} Z_{k+1}$\;
  \eIf{$u < \alpha(X_k,Y_{k+1})$}{
   $X_{k+1} = Y_{k+1}$\;
   }{
   $X_{k+1} = X_{k}$\;
  }
 }
 Compute $\Gamma_{N,n} = \sum_{k=N+1}^{N+n} \gamma_k$\;
 Put  $V_{N,n}=\displaystyle\frac{C_{p}}{(2\pi)^d}\sum_{k=N+1}^{N+n} \frac{\gamma_k}{\Gamma_{N,n}} \mathcal{F}[g](-X_k)\frac{\mathcal{F}[\pi](X_k)}{|\mathcal{F}[\pi](X_k)|}.$
\caption{The Metropolis-Adjusted Langevin Algorithm in Fourier domain}\label{alg:MALA}
\end{algorithm}
\end{center}
\par

The Metropolis step in MALA makes the Markov chain 
reversible with respect to
$p(x)$, and hence $p(x)$ is a stationary distribution for the chain, see \citet{metropolis1953equation}.

\section{Geometric Ergodicity of MCMC algorithms}\label{sec:ergodicity} 
In this section, we discuss necessary and sufficient conditions needed for a Markov chain 
generated by a Metropolis-Hastings-type algorithm to be geometrically ergodic. All the results 
below will be formulated for a general target distribution $\rho$ and will be applied further to both 
$\pi$ (distribution with respect to which we want to compute the expectation) and $p$ (distribution with 
a density proportional to $\mathcal{F}[\pi]$).
\par
We say that a Markov chain is geometrically ergodic if its Markov kernel $P$ converges to a stationary distribution $\rho$ exponentially fast,
that is, there exist $r \in(0,1)$ and a function $M: \R^d\to\R$, finite for $\rho$-almost every $x\in\R^d$, such that
\[
	\bigl\| P^n(x,\cdot) - \rho(\cdot) \bigr\|_{\mathrm{TV}} \leq M(x) r^n,
	\quad x\in\R^d,
\]
where $P^n(x,\cdot)$ is the $n$-step transition law of the Markov chain, that is, $P^n(x,A) = \P (X_n\in A \vert X_0=x)$,
and $\|\cdot\|_{\mathrm{TV}}$ stands for the total variation distance.
The importance of geometric ergodicity in MCMC applications lies in the fact that
it implies central limit theorem (see, for example, \cite{ibragimovlinnik1971}, \cite{tierney1994}, \cite{jones2004}) and 
exponential concentration bounds (see, for example, \cite{dedecker2015}, \cite{wintenberger2017}, \cite{havert2019})
for the estimator $V_{n,N}$ defined in \eqref{eq:29032018a2}. 

\subsection{Metropolis-Hastings Random Walk}
Geometric ergodicity of Metropolis-Hastings Random Walk was extensively studied in 
\citet{meyntweedie1993}, \citet{tierney1994}, \citet{robertstweedie1996}, \citet{mengersetweedie1996}, \cite{jarnerhansen2000}, and \cite{jarnertweedie2003}. 
We summarize the main result in the following proposition.

\begin{proposition}[MHRW]\label{MHRW}
Suppose that the target density $\rho$ is strictly positive and continuous. Suppose further that the proposal density
$q$ is strictly positive in some region around zero (that is, there exist
$\delta>0$ and $\eps>0$ such that $q(x)\ge\eps$ for $|x|\leq\delta$) 
and satisfies $\int_{\R^d} |x| q(x) dx < \infty$. The following holds.
\begin{description}
	\item[--] (Necessary condition) 
	If the Markov chain generated by the MHRW algorithm is geometrically ergodic, 
	then there exists $s>0$ such that
	\[
		\int_{\R^d} e^{s|x|} \rho(x) \, dx < \infty.
	\]
	\item[--] (Sufficient condition) 
	Let $A(x) = \{ y\in\R^d: \ \rho(y) \ge  \rho(x) \}$ be the region of certain acceptance.
	If $\rho$ has continuous first derivatives,
	\[
	\lim_{|x|\to\infty} \Bigl\langle \frac{x}{|x|} , \nabla\log \rho(x) \Bigr\rangle= -\infty
	\quad\text{and}\quad
	\liminf_{|x|\to\infty} \int_{A(x)} q(x-y) \, dy > 0,
	\]
	then the Markov chain generated by the MHRW algorithm is geometrically ergodic. 
\end{description}
\end{proposition}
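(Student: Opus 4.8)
The plan is to reduce both statements to the drift--and--minorisation characterisation of geometric ergodicity and then to invoke, and sketch the mechanism behind, the cited works. Under the stated hypotheses the MHRW kernel $P$ is $\rho$-irreducible and aperiodic, and every compact set is small: for $x$ in a fixed compact and $|y-x|\le\delta$ the one-step density $q(y-x)\,\alpha(x,y)$ is bounded below by $\eps$ times a strictly positive factor coming from the continuity and positivity of $\rho$, which gives a uniform one-step minorisation on compacts. By the Meyn--Tweedie theory, geometric ergodicity is then equivalent to the existence of $V\ge1$ (finite $\rho$-a.e.), $\lambda\in(0,1)$, $b<\infty$ and a compact $C$ with $PV\le\lambda V+b\,\mathbf{1}_C$, so it suffices to produce such a $V$ (for sufficiency) or to rule one out (for necessity).

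For the necessary condition I would follow Jarner--Tweedie. The crucial structural observation is that the one-step increment $|X_{k+1}-X_k|$ of the MHRW chain is stochastically dominated by $|Y|$, $Y\sim q$, since Metropolis rejection can only suppress a proposed jump, never enlarge it; combined with $\int_{\R^d}|x|q(x)\,dx<\infty$ this forces the drift function to be, up to constants, exponential --- more precisely, geometric ergodicity implies a geometric drift inequality that can be realised with $V(x)=e^{s|x|}$ for some $s>0$. Since the chain is positive Harris recurrent with invariant law $\rho$, a geometric drift inequality entails $\rho(V)<\infty$ (integrate $PV\le\lambda V+b\,\mathbf{1}_C$ against $\rho$ and use $\rho(PV)=\rho(V)$), so $\int_{\R^d}e^{s|x|}\rho(x)\,dx\le b\,\rho(C)/(1-\lambda)<\infty$. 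The delicate point --- and the place where the first-moment hypothesis on $q$ is really used --- is the first step, namely upgrading an arbitrary geometric drift function to the exponential one; this is exactly the content of the random-walk-type analysis of Jarner--Tweedie.

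For the sufficient condition I would verify the geometric drift directly with the test function $V_s(x)=e^{s|x|}$ for a suitable $s>0$, in the spirit of Roberts--Tweedie and Jarner--Hansen. Writing
\[
\frac{PV_s(x)}{V_s(x)}=\int_{\R^d}q(y-x)\,\alpha(x,y)\,e^{s(|y|-|x|)}\,dy+\Bigl(1-\int_{\R^d}q(y-x)\,\alpha(x,y)\,dy\Bigr),
\]
I would split the integration region, for $|x|$ large, into the certain-acceptance region $A(x)=\{\rho(y)\ge\rho(x)\}$ and its complement. On $A(x)$ the acceptance probability is $1$; since $q$ has a first moment it is essentially supported on an $O(1)$ neighbourhood of $0$, and the super-exponential condition $\langle x/|x|,\nabla\log\rho(x)\rangle\to-\infty$ makes, within that neighbourhood, $A(x)$ asymptotically an inward half-space through $x$, so there $|y|-|x|$ is bounded above by a strictly negative constant on a set of proposal mass bounded below by $\liminf_{|x|\to\infty}\int_{A(x)}q(x-y)\,dy>0$; this piece contributes a factor strictly below $1$. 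On the complement of $A(x)$ one has $\alpha(x,y)=\rho(y)/\rho(x)$, and the same super-exponential decay shows $\rho(y)/\rho(x)\cdot e^{s(|y|-|x|)}\to0$ uniformly over the relevant outward directions, so accepted outward moves are asymptotically negligible, while the remaining rejection mass contributes at most $1$ but with a genuine deficit produced by the accepted inward part. Collecting the pieces gives $\limsup_{|x|\to\infty}PV_s(x)/V_s(x)=:\lambda<1$, hence $PV_s\le\lambda' V_s+b\,\mathbf{1}_C$ for some $\lambda'\in(\lambda,1)$ and compact $C$; together with the minorisation on compacts this yields geometric ergodicity. The main obstacle is the bookkeeping in this split: one must control, uniformly over the cone of outward directions, the accepted outward mass (shown negligible precisely by the super-exponential condition) against the genuinely contracting inward mass captured by the acceptance-region hypothesis, and handle the interface between $A(x)$ and its complement.
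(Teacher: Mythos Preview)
Your proposal is correct and, in fact, goes well beyond what the paper actually does: the paper's ``proof'' consists of a one-line citation to \cite{jarnerhansen2000}, attributing the necessary condition to their Corollary~3.4 and the sufficient condition to their Theorem~4.1, with no further argument. What you have written is a faithful sketch of the mechanisms behind those cited results (irreducibility/aperiodicity and smallness of compacts from the local positivity of $q$ and continuity/positivity of $\rho$; the bounded-increment argument forcing an exponential Lyapunov function for necessity; and the split into $A(x)$ versus its complement together with $V_s(x)=e^{s|x|}$ for sufficiency), so you are essentially reconstructing the content the paper merely points to.

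One minor remark on attribution: you credit the necessary direction to Jarner--Tweedie, whereas the paper cites Jarner--Hansen for both halves; the exponential-moment necessary condition appears in both sources (with Jarner--Tweedie treating the more general random-walk-type setting), so this is not a mathematical discrepancy, only a bibliographic one.
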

\begin{proof}
The necessary and sufficient conditions for geometric ergodicity
follow from \cite[Corollary 3.4 and Theorem 4.1 correspondingly]{jarnerhansen2000}.
\end{proof}

\subsection{Metropolis-Adjusted Langevin Algorithm}
Convergence properties of  MALA were studied in  \citet{robertstweedie1996}, \citet{ad17}, \citet{dm17}, \cite{brosse2018diffusion}.
We summarize them in the following proposition.

\begin{proposition}[MALA]
\label{prop:mala-erg}
Suppose that $\pi$ is infinitely differentiable function and $\nabla \log \rho (x)$ grows not faster than a polynomial.
\begin{description}
	\item[--] (Necessary condition) If the Markov chain generated by MALA is geometrically ergodic, 
	then 
	\[
		\lim_{|x|\to\infty} \nabla \log \rho (x)  \ne 0.
	\]
	\item[--] (Sufficient condition) 
	Assume the following.
	\begin{itemize}
		\item [(a)] The function $\log \rho(x)$ has Lipschitz continuous gradient, that is,
		there exists $L>0$ such that $| \nabla\log \rho(x) - \nabla\log \rho(y) | \leq L | x-y |$  for all $x,y\in\R^d$.
		\item [(b)] The function $-\log \rho(x)$ is strongly convex for large $u$, that is, there exist $K>0$ and $m>0$ such that for all $x\in\R^d$ with $|x|>K$ and all $v\in\R^d$,
		$\langle-\nabla^2 \log \rho(x) v , v \rangle \ge m |v|^2$.
		\item [(c)]The function $\log \rho(x)$ has uniformly bounded third derivates, that is, there exists $M>0$ such that $\sup_{x\in\R^d} |\mathrm{D}^3  \log \rho(x) | \leq M$,
		where $\mathrm{D}^3$ stands for a differential operator of the third order.
	\end{itemize}
	Then the Markov chain generated by MALA is geometrically ergodic. 
\end{description}
\end{proposition}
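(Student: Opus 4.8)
The plan is to reduce both statements to the Meyn--Tweedie drift characterisation of geometric ergodicity \citep{meyntweedie1993}: for a $\psi$-irreducible aperiodic kernel $P$, geometric ergodicity is equivalent to the existence of a measurable $V\colon\R^d\to[1,\infty)$, constants $\lambda\in(0,1)$ and $b<\infty$, and a small set $C$ with $(PV)(x)\le\lambda V(x)+b\,\mathbf{1}_C(x)$ for all $x$. As in Proposition~\ref{MHRW}, the two directions then follow by combining this with the MALA analyses of \citet{robertstweedie1996}, \citet{dm17}, and \cite{brosse2018diffusion}; below I indicate the mechanism and the step I expect to be hardest.

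For the necessary condition I would argue by contraposition. Suppose there is a sequence $x_n$ with $|x_n|\to\infty$ along which $\nabla\log\rho(x_n)\to 0$. First I would show that along this sequence the MALA proposal $Y=x_n+\gamma\nabla\log\rho(x_n)+\sqrt{2\gamma}\,Z$ is, up to a vanishing deterministic shift, an isotropic Gaussian centred at $x_n$; then, using the polynomial growth of $\nabla\log\rho$ and a second-order Taylor expansion of $\log\rho$ (and of the forward and reverse proposal log-densities), I would show that the acceptance probability $\alpha(x_n,Y)$ stays bounded away from $0$. Consequently, near each $x_n$ the chain moves like a driftless random walk with Gaussian-type increments, so $(PV)(x_n)/V(x_n)\to 1$ for every candidate Lyapunov function of the form $V(x)=e^{s\phi(|x|)}$ with $\phi$ nondecreasing and unbounded; this rules out the geometric drift inequality, hence geometric ergodicity.

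For the sufficient condition I would first dispatch the structural hypotheses: $\psi$-irreducibility and aperiodicity follow from strict positivity and continuity of the Gaussian proposal density together with infinite differentiability of $\rho$ (so that $\alpha(x,\cdot)>0$ on a set of positive Lebesgue measure), and every compact is a small set by the same Gaussian minorisation. The substance is a geometric drift bound, for which I would take $V(x)=\exp\!\big(s\sqrt{1+|x|^2}\big)$ with $s>0$ small. For $|x|$ large, assumption (b) gives $\langle x,\nabla\log\rho(x)\rangle\le -m|x|^2+O(|x|)$, so the deterministic part of the MALA move points strictly inward, while assumption (a) bounds its length; assumptions (a) and (c) would then be used to lower-bound the acceptance probability $\alpha(x,Y)$ uniformly for $|x|$ large and for the prescribed (small) step sizes, so that the \emph{accepted} move still has inward drift on average. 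A Gaussian tail estimate for $\sqrt{2\gamma}\,Z$ then yields $(PV)(x)\le\lambda V(x)$ outside a ball $C=\{|x|\le R\}$ and $(PV)(x)\le\lambda V(x)+b$ on $C$, and the Meyn--Tweedie theorem finishes the proof.

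The hard part, in both directions, will be controlling the Metropolis acceptance probability in the tails. For the sufficient condition I must lower-bound $\alpha(x,Y)$ uniformly over large $|x|$; this is exactly where the uniform third-derivative bound (c) is needed, to control both the mismatch between the forward and reverse proposal log-densities and the second-order remainder in $\log\rho(Y)-\log\rho(x)$. For the necessary condition I must instead show that this same acceptance probability does \emph{not} degenerate along the sequence $x_n$, so that the ``random-walk-like'' heuristic is genuinely valid there. The remaining ingredients --- verifying the drift inequality on the compact part, the minorisation on compacts, and the invocation of Meyn--Tweedie --- are routine.
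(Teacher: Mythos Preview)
The paper's own proof consists of two citations and nothing else: the necessary condition is attributed to \citet[Theorem~4.3]{robertstweedie1996} and the sufficient condition to \citet[Section~6]{brosse2018diffusion}. Your proposal is therefore not an alternative to the paper's argument but an attempt to reprove the cited results from first principles. For the sufficient direction your sketch is sound and follows the standard Lyapunov route that the cited references take: irreducibility and smallness of compacts via the Gaussian minorisation, inward mean displacement of the proposal from~(b), and uniform control of the acceptance probability in the tails from~(a) and~(c), assembled into a geometric drift for $V(x)=\exp\bigl(s\sqrt{1+|x|^2}\bigr)$. That part would work.

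The necessary direction, however, has a genuine gap. Showing that $(PV)(x_n)/V(x_n)\not<1$ for every $V$ of the special form $e^{s\phi(|x|)}$ with $\phi$ nondecreasing does \emph{not} rule out the Meyn--Tweedie drift condition: the characterisation only asserts the existence of \emph{some} measurable $V\ge 1$ with $PV\le\lambda V+b\,\mathbf 1_C$, and there is no reason that $V$ must be radial, monotone, or of exponential type. Ruling out a restricted family of candidate Lyapunov functions says nothing about the chain. To close the argument you need a statement that applies to the kernel itself. The approach in \citet{robertstweedie1996} is of that kind: when the Langevin drift vanishes at infinity one shows directly that the chain cannot contract geometrically, for instance by exhibiting that hitting times of a fixed small set from $x_n$ fail to have uniformly bounded exponential moments, or (equivalently, via reversibility) that the $L^2(\rho)$ spectral gap is zero. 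Your ``random-walk-like'' heuristic is exactly the right intuition, and your identification of the acceptance-probability control as the delicate step is accurate; but the conclusion must be cashed out at the level of the chain (return times or spectrum), not at the level of a handpicked Lyapunov class.
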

\begin{proof}
The necessary condition for geometric ergodicity of MALA follows from \cite[Theorem 4.3]{robertstweedie1996}. 
The sufficient conditions can be found, for example, in \cite[Section 6]{brosse2018diffusion}.
\end{proof}
\section{Examples}\label{sec:examples}
\subsection{Elliptically contoured stable distributions}
The elliptically contoured stable distribution
is a special symmetric case of the stable distribution. 
We say that $X\in\R^d$ is elliptically contoured $\alpha$-stable random vector, $0<\alpha\leq2$,
if it has characteristic function $\mathcal{F}[\pi]$ given by
\[
	\mathcal{F}[\pi](u) 
	= \exp\left( - (u^{\T}\Sigma u)^{\alpha/2} + i u^{\T}\mu \right),\quad u\in\mathbb{R}^{d},
\]
for some $d \times d$ positive semidefinite symmetric matrix $\Sigma$ and a shift vector $\mu\in\R^d$.
We note that for $\alpha = 2$ we obtain the normal distribution and
for $\alpha = 1$ we obtain the Cauchy distribution. Proposition~\ref{MHRW} implies 
the following corollary.
\par
\begin{corollary}
	Let $\pi$ have an elliptically contoured $\alpha$-stable distribution, $0<\alpha\leq2$,
	with positive definite $\Sigma$,
	and let $p \propto |\mathcal{F}[\pi]|$.
	Then the following holds.
\begin{description}
	\item[--] (Original domain) The MHRW algorithm for $\pi$ is not geometrically ergodic for any $\alpha<2$ and any proposal density $q(x)$.
	\item[--] (Fourier domain) The MHRW algorithm for $p$ is geometrically ergodic for any $1<\alpha\leq2$ provided that 
	the proposal density $q(x)$ satisfies 
	\[
		\liminf_{|x|\to\infty} \int_{ \{ y\in\R^d: \ p(y) \ge p(x) \}} q(x-y) \, dy > 0.
	\]
\end{description}
\end{corollary}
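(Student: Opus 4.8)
The plan is to verify the hypotheses of Proposition~\ref{MHRW} in each of the two domains. For the original-domain claim I will use the necessary condition: if the MHRW chain for $\pi$ were geometrically ergodic, then $\int_{\R^d} e^{s|x|}\pi(x)\,dx<\infty$ for some $s>0$, i.e. $\pi$ has an exponential moment. But an elliptically contoured $\alpha$-stable law with $\alpha<2$ is heavy-tailed: its density decays only polynomially, like $|x|^{-(d+\alpha)}$ as $|x|\to\infty$ (this is the standard tail asymptotics for stable densities, and in the elliptical case it follows from the one-dimensional marginal behaviour together with positive-definiteness of $\Sigma$). Hence no exponential moment exists, contradicting the necessary condition, so the chain cannot be geometrically ergodic for any symmetric proposal $q$ satisfying the mild regularity assumptions. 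I should state the tail asymptotics with a citation (e.g. Nolan, or Samorodnitsky--Taqqu) rather than rederiving it.

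For the Fourier-domain claim I will apply the sufficient condition of Proposition~\ref{MHRW} to $\rho=p\propto|\mathcal{F}[\pi]|$. Since $|\mathcal{F}[\pi](x)|=\exp\bigl(-(x^{\T}\Sigma x)^{\alpha/2}\bigr)$, we have $\log p(x)=-(x^{\T}\Sigma x)^{\alpha/2}+\mathrm{const}$, which is smooth on $\R^d\setminus\{0\}$ and, because $\Sigma$ is positive definite, behaves like $-c|x|^{\alpha}$ up to equivalence of norms. The first task is to check $p$ is integrable and strictly positive and continuous: positivity and continuity are immediate from the exponential form, and integrability holds since $\exp(-(x^{\T}\Sigma x)^{\alpha/2})$ decays superpolynomially for $\alpha>0$. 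The second task is the drift condition
\[
	\lim_{|x|\to\infty}\Bigl\langle \frac{x}{|x|},\nabla\log p(x)\Bigr\rangle=-\infty.
\]
Here $\nabla\log p(x)=-\frac{\alpha}{2}(x^{\T}\Sigma x)^{\alpha/2-1}\cdot 2\Sigma x=-\alpha(x^{\T}\Sigma x)^{\alpha/2-1}\Sigma x$, so
\[
	\Bigl\langle \frac{x}{|x|},\nabla\log p(x)\Bigr\rangle
	=-\alpha(x^{\T}\Sigma x)^{\alpha/2-1}\,\frac{x^{\T}\Sigma x}{|x|}
	=-\alpha\,\frac{(x^{\T}\Sigma x)^{\alpha/2}}{|x|}.
\]
Using $\lambda_{\min}(\Sigma)|x|^2\le x^{\T}\Sigma x\le\lambda_{\max}(\Sigma)|x|^2$ with $\lambda_{\min}(\Sigma)>0$, this is bounded above by $-\alpha\,\lambda_{\min}(\Sigma)^{\alpha/2}|x|^{\alpha-1}$, which tends to $-\infty$ precisely when $\alpha>1$. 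This is exactly where the restriction $1<\alpha\le2$ enters, and it is the crux of the argument. The $\liminf$ condition on the acceptance region is left as a hypothesis in the statement, so nothing further is needed there.

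I expect the only real subtlety to be bookkeeping around the origin: $\log p$ and its gradient are not smooth at $x=0$ when $\alpha<2$, but Proposition~\ref{MHRW} only requires continuous first derivatives of $\rho$ (equivalently, away from where they matter for the asymptotic drift), and the behaviour at a single point is irrelevant for geometric ergodicity, which is governed by the tails; I would remark that one may smooth $p$ near $0$ without affecting the conclusion, or simply note that $p\in C^1(\R^d)$ for $1<\alpha\le2$ since $(x^{\T}\Sigma x)^{\alpha/2}$ is continuously differentiable there (its gradient vanishes at $0$ when $\alpha>1$). The potential obstacle, if any, is making the tail-decay claim for the multivariate elliptical stable density fully rigorous in the necessary-condition half; invoking the known asymptotics from the literature sidesteps the need for a self-contained derivation.
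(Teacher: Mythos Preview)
Your proposal is correct and follows the same approach as the paper's proof: for the original domain you invoke the polynomial tail of the stable density to violate the necessary exponential-moment condition of Proposition~\ref{MHRW}, and for the Fourier domain you compute $\langle x/|x|,\nabla\log p(x)\rangle=-\alpha(x^{\T}\Sigma x)^{\alpha/2}/|x|$ and use positive-definiteness of $\Sigma$ to show this tends to $-\infty$ when $\alpha>1$, which is exactly what the paper does. Your write-up is simply more detailed (the eigenvalue bound, the remark on $C^1$-regularity at the origin, and the tail exponent $-(d+\alpha)$ rather than the paper's $-(1+\alpha)$), but the argument is the same.
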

\begin{proof}
The first statement follows from the fact that for $\alpha<2$, $\pi(x)\sim|x|^{-(1+\alpha)}$ as $|x|\to\infty$, 
see \cite{nolan2018}. 
Since $\pi$ does not have exponential moments, the necessary condition from Proposition~\ref{MHRW}
does not hold, and MHRW is not geometrically ergodic. The second statement also follows from 
Proposition~\ref{MHRW}, since for any $1<\alpha\leq2$,
\[
\lim_{|x|\to\infty} \Bigl\langle \frac{x}{|x|} , \nabla\log p(x) \Bigr\rangle=-\lim_{|x|\to\infty} 
 \frac{\alpha (x^\top \Sigma x)^{\alpha/2}}{|x|}=-\infty,
\]
and the proof is complete.
\end{proof}

\subsection{Symmetric infinitely divisible distributions}
Consider a symmetric measure $\nu$ on $\mathbb{R}^{d}$
satisfying $\nu\left(\{0\}\right)=0$ and $\int_{\mathbb{R}^{d}}\left|x\right|^{2}\nu(dx)<\infty$.
We say that a distribution $\pi$ is infinitely divisible and symmetric if, 
according to the L\'evy-Khintchine representation,
its has a characteristic function $\mathcal{F}[\pi]$ given by 
\begin{align*}
\mathcal{F}[\pi](u)=\exp\biggl\{ -\frac{1}{2} u^{\top} \Sigma u +{i} \mu^{\top}u  +\int_{\mathbb{R}^{d}}\left(\cos\left( x^{\top}u \right)-1\right)\nu(dx)\biggr\} ,\quad u\in\mathbb{R}^{d},
\end{align*}
where $\Sigma$ is a symmetric positive semidefinite $d\times d$ matrix and
$\mu\in\mathbb{R}^{d}$ is a drift vector.
The triplet $\left(\Sigma,\mu,\nu\right)$ is called the L\'evy-Khintchine
triplet of $\pi$.

\begin{corollary}
	Let $\pi$ have a symmetric infinitely divisible distribution with a L\'evy-Khintchine triplet $\left(\Sigma,\mu,\nu\right)$,
	and let $p \propto |\mathcal{F}[\pi]|$.
	Then the following holds.
\begin{description}
	\item[--] (Original domain) The MHRW algorithm for $\pi$ is not geometrically ergodic for any proposal density $q(x)$ if \(\nu\) does not have exponentially decaying tails,
	that is, \(\int_{{|x|>1}} e^{s|x|}\nu(dx)=\infty\) for all $s>0$.
	\item[--] (Fourier domain)  
	Assume that the L\'evy measure $\nu(dx)$ possesses a nonnegative Lebesgue density $\nu(x)$ satisfying
	\begin{equation}
	\label{eq:cond-nu-ge1}
		s^{\alpha+d}\nu(su)\longrightarrow L(u),\quad  s \to +0, 
        \end{equation}
for some $\alpha>1$ uniformly over sets in  $\mathbb{R}^d$ not containing the origin, where the function $L(u)$ fulfils 
\begin{equation}
\label{eq:cond-nu-ge2}
	\int |y| L(y)\,dy<\infty 
	\quad\text{and}\quad 
	\inf_{|e|=1}\int_{\R^d}\left(e^{\top}y\right)\sin(e^{\top}y)\,L(y)\,dy>0.
\end{equation}
If the proposal density $q(x)$ satisfies 
	\[
		\liminf_{|x|\to\infty} \int_{ \{ y\in\R^d: \ p(y) \ge p(x) \}} q(x-y) \, dy > 0,
	\]
then 	MHRW algorithm for $p$ is geometrically ergodic.
\end{description}
\end{corollary}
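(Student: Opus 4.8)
The strategy is to deduce both assertions from Proposition~\ref{MHRW}, applied with target $\rho=\pi$ for the original domain and with $\rho=p$ for the Fourier domain. For the \emph{original domain} I would only use the necessary condition of Proposition~\ref{MHRW}, combined with the classical criterion for exponential moments of infinitely divisible laws: for a symmetric infinitely divisible $\pi$ with L\'evy measure $\nu$ and each $s>0$, one has $\int_{\R^d}e^{s|x|}\pi(dx)<\infty$ if and only if $\int_{|x|>1}e^{s|x|}\nu(dx)<\infty$. Hence, if $\nu$ has no exponentially decaying tails, $\pi$ carries no exponential moment, the necessary condition fails for every proposal $q$ satisfying the hypotheses of Proposition~\ref{MHRW}, and the MHRW chain targeting $\pi$ cannot be geometrically ergodic.

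For the \emph{Fourier domain}, the modulus of the characteristic function is
\[
|\mathcal{F}[\pi](u)|=\exp\Bigl\{-\tfrac12 u^{\T}\Sigma u+\int_{\R^d}\bigl(\cos(u^{\T}y)-1\bigr)\nu(y)\,dy\Bigr\},
\]
so $p(x)\propto|\mathcal{F}[\pi](x)|$ is strictly positive and continuous. Differentiating under the integral sign is legitimate, because $|y\sin(x^{\T}y)|\le|x|\,|y|^2$ tames the singularity of $\nu$ at the origin while $\int_{|y|>1}|y|\,\nu(y)\,dy\le\int_{\R^d}|y|^2\nu(dy)<\infty$ handles the tail; this yields the continuous gradient
\[
\nabla\log p(x)=-\Sigma x-\int_{\R^d}y\sin(x^{\T}y)\,\nu(y)\,dy ,
\]
so $p$ meets the regularity hypotheses of Proposition~\ref{MHRW}, and it remains to check the drift condition $\langle x/|x|,\nabla\log p(x)\rangle\to-\infty$.

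Write $x=re$ with $|e|=1$. Then $\langle x,\nabla\log p(x)\rangle=-x^{\T}\Sigma x-\int_{\R^d}(x^{\T}y)\sin(x^{\T}y)\,\nu(y)\,dy$, and substituting $z=ry$ turns the last integral into $r^{\alpha}\int_{\R^d}(e^{\T}z)\sin(e^{\T}z)\,[r^{-(\alpha+d)}\nu(z/r)]\,dz$. By \eqref{eq:cond-nu-ge1} the bracketed factor converges to $L(z)$ as $r\to\infty$, locally uniformly off the origin, and \eqref{eq:cond-nu-ge2} makes the limit $\int_{\R^d}(e^{\T}z)\sin(e^{\T}z)L(z)\,dz$ finite and bounded below by $c_0:=\inf_{|e|=1}\int_{\R^d}(e^{\T}z)\sin(e^{\T}z)L(z)\,dz>0$. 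Consequently, for $|x|$ large and uniformly in the direction of $x$, $\langle x,\nabla\log p(x)\rangle\le-x^{\T}\Sigma x-\tfrac{c_0}{2}|x|^{\alpha}\le-\tfrac{c_0}{2}|x|^{\alpha}$, so
\[
\Bigl\langle\frac{x}{|x|},\nabla\log p(x)\Bigr\rangle\le-\tfrac{c_0}{2}|x|^{\alpha-1}\longrightarrow-\infty\qquad(|x|\to\infty),
\]
because $\alpha>1$. The same scaling shows $|\mathcal{F}[\pi](u)|$ decays like $e^{-c|u|^{\alpha}}$, so $|\mathcal{F}[\pi]|\in L^1(\R^d)$ and $p$ is a genuine density. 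Combined with the assumed bound $\liminf_{|x|\to\infty}\int_{\{p(y)\ge p(x)\}}q(x-y)\,dy>0$ and the standing hypotheses on $q$, the sufficient part of Proposition~\ref{MHRW} then gives geometric ergodicity of the MHRW chain for $p$.

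The technical core, and the step I expect to be the main obstacle, is justifying the passage to the limit in the rescaled integral uniformly over directions $e$. I would split $\R^d$ into a small ball $\{|z|<\delta\}$, a fixed annulus $\{\delta\le|z|\le R\}$, and a tail $\{|z|>R\}$: on the annulus the locally uniform convergence in \eqref{eq:cond-nu-ge1} suffices; near the origin one bounds $|(e^{\T}z)\sin(e^{\T}z)|\le|z|^2$ and uses the scaling relation --- which, together with $\int_{\R^d}|x|^2\nu(dx)<\infty$, also forces $\alpha<2$ --- to control the contribution by $O(\delta^{2-\alpha})$ uniformly in $r$; in the tail one bounds $|(e^{\T}z)\sin(e^{\T}z)|\le|z|$ and uses $\int_{\R^d}|x|^2\nu(dx)<\infty$ together with $\alpha>1$ to control it by $O(R^{1-\alpha})$ uniformly in $r$. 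Tracking these three estimates uniformly in $e$ and matching them against $\int_{\R^d}|y|L(y)\,dy<\infty$ is precisely where \eqref{eq:cond-nu-ge1}--\eqref{eq:cond-nu-ge2} enter; letting $\delta\to0$ and $R\to\infty$ then closes the argument.
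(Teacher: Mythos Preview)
Your proposal is correct and follows the same route as the paper: both parts are reduced to Proposition~\ref{MHRW}, the first via Sato's criterion on exponential moments of infinitely divisible laws, the second by the same rescaling $z=|x|\,y$ in $\int(x^{\top}y)\sin(x^{\top}y)\,\nu(y)\,dy$ to extract the factor $|x|^{\alpha}$ and then invoke \eqref{eq:cond-nu-ge1}--\eqref{eq:cond-nu-ge2}.

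The one technical difference is how the contribution near the origin is handled after rescaling. The paper simply observes that $(e^{\top}y)\sin(e^{\top}y)\ge 0$ for $|y|\le\pi/2$, so discarding a small ball $\{|y|\le\varepsilon\}$ only sharpens the upper bound and one may pass to the limit on $\{|y|>\varepsilon\}$ alone. You instead estimate the small-ball contribution directly by $O(\delta^{2-\alpha})$ via $|(e^{\top}z)\sin(e^{\top}z)|\le|z|^{2}$, which is equally valid but costs the additional (correct) observation that $\alpha<2$ is forced by $\int|x|^{2}\nu(dx)<\infty$. The paper's positivity trick is a little slicker; on the other hand, your three-region decomposition is more explicit about the tail than the paper, which leaves that passage to the limit implicit. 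Your side remarks that $|\mathcal{F}[\pi]|\in L^{1}(\R^{d})$ and that $\alpha<2$ are not stated in the paper and are worth keeping.
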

\begin{proof}
The first statement follows from the fact that \(\int_{x\in\R^d} e^{s|x|}\pi(x)\,dx<\infty\) for some $s>0$ if and only if \(\int_{|x|>1} e^{s|x|}\nu(dx)<\infty\) for some $s>0$, see \cite[Theorem 25.17]{sato1999}. 
Hence Proposition~\ref{MHRW} implies that the MHRW  algorithm can not be geometrically ergodic if the invariant density \(\pi\) is infinitely divisible with the L\'evy measure \(\nu\) having only polynomial tails. 
To prove the second statement, we note that 
\[
	\Bigl\langle \frac{u}{|u|} , \nabla\log p(u) \Bigr\rangle =  |u|^{-1} \left(u^\top\Sigma u-\int_{\mathbb{R}^{d}}\left(u^{\top}x\right)\sin(u^{\top}x)\,\nu(dx)\right).
\]
Let $e_u=u/|u|$. The change of variables $y = |u|x$ implies  
\begin{align*}
	\Bigl\langle \frac{u}{|u|} , \nabla\log p(u) \Bigr\rangle& \leq -|u|^{-1}\int_{\mathbb{R}^{d}}\left(u^{\top}x\right)\sin(u^{\top}x)\,\nu(dx)\\
 	&= -|u|^{\alpha-1}\int_{\R^d}\left(e_{u}^{\top}y\right)\sin(e_{u}^{\top}y)\frac{\nu(y/|u|)}{|u|^{d+\alpha}}\,dy.
\end{align*} 
According to Proposition~\ref{MHRW}, we need to show that the limit of this expression tends to $-\infty$ as $|u|\to \infty$.
In order to take a limit, we exclude a vicinity of the origin from integration. To do this, we note that, 
according to the assumption \eqref{eq:cond-nu-ge2}, there exist a small $\eps>0$ such that
\begin{align}
\label{eq:cond-nu-ge3}
	0<\int_{\{y\in\R^d: |y|>\eps\}}\left(e^{\top}y\right)\sin(e^{\top}y)\,L(y)\,dy<\infty.
\end{align} 
for any $e\in \mathbb{R}^d$ with $|e|=1.$
Since $\left(e_{u}^{\top}y\right)\sin(e_{u}^{\top}y)\ge0$ for $|y|\leq\pi/2$, $\eps$ can be chosen such that $\eps\leq\pi/2$. Hence
\begin{align*}	
	-|u|^{\alpha-1}\int_{\R^d}\left(e_{u}^{\top}y\right)\sin(e_{u}^{\top}y)\frac{\nu(y/|u|)}{|u|^{d+\alpha}}\,dy&\leq -|u|^{\alpha-1}\int_{\{y\in\R^d: |y|>\eps\}}\left(e_{u}^{\top}y\right)\sin(e_{u}^{\top}y)\frac{\nu(y/|u|)}{|u|^{d+\alpha}}\,dy.
\end{align*} 
Due to \eqref{eq:cond-nu-ge1}, \eqref{eq:cond-nu-ge2}, and \eqref{eq:cond-nu-ge3} we have
\begin{align*}
	\lim_{|u|\to \infty}\Bigl\langle \frac{u}{|u|} , \nabla\log p(u) \Bigr\rangle& 
	 \leq - \inf_{|e|=1} \int_{\{y\in\R^d: |y|>\eps\}}\left(e^{\top}y\right)\sin(e^{\top}y)L(y)\,dy \, \lim_{|u|\to \infty}|u|^{\alpha-1} 
	 = -\infty,
\end{align*} 
and this completes the proof.
\end{proof}

\begin{example}[Stable-like processes]
Consider a \(d\)-dimensional infinitely divisible distribution with  marginal L\'evy measures of a stable-like behaviour:
\[
\nu_{j}(dx_{j})=k_{j}(x_{j})\, dx_{j}=\frac{l_{j}(|x_{j}|)}{|x_{j}|^{1+\alpha}}\, dx_{j},\quad j=1,\ldots,d,
\]
where $l_{1},\ldots,l_{d}$ are some nonnegative bounded nonincreasing functions on $[0,\infty),$  $l_{j}(0)>0$ and $\alpha\in (1,2).$  We combine these marginal L\'evy measures into a \(d\)-dimensional L\'evy density \(\nu\) via a L\'evy copula $\mathcal{C}:$
\[
\mathcal{C}(\xi_{1},\ldots,\xi_{d})=2^{2-d}\left(\sum_{j=1}^{d}\left|\xi_{j}\right|^{-\theta}\right)^{-1/\theta}\left(\eta \In_{\{\xi_{1}\cdot\ldots\cdot\xi_{d}\geq0\}}-(1-\eta)\In_{\{\xi_{1}\cdot\ldots\cdot\xi_{d}<0\}}\right), \quad \theta>0, \quad \eta\in[0,1]
\]
 as 
\[
\nu(x_{1},\ldots,x_{d})=G(\Pi_{1}(x_{1}),\ldots,\Pi_{d}(x_{d}))\, k_{1}(x_{1})\cdot\ldots\cdot k_{d}(x_{d}),
\]
where $G(\xi_{1},\ldots,\xi_{d})=\left.\partial_{1}\ldots\partial_{d}\mathcal{\, C}\right|_{\xi_{1}\ldots,\xi_{d}},$ \(\Pi_{j}(x_{j})=\nu\left(\mathbb{R},\ldots,\mathcal{I}(x_{j}),\ldots\mathbb{R}\right)\mathrm{sign}(x_{j})\)
and
\[
\mathcal{I}(x)=\begin{cases}
(x,\infty), & x\geq0,\\
(-\infty,x], & x<0.
\end{cases}
\]
Since the function $G$ is homogeneous of order $1-d$, we get 
\[
s^{\alpha+d}\nu(us)\to  L(u):=G\left(\overline{\Pi}_{1}(u_{1}),\ldots,\overline{\Pi}_{d}(u_{d})\right)
\overline{k}_{1}(u_{1})\cdot\ldots\cdot\overline{k}_{d}(u_{d}), \quad s \to 0,
\]
where
\[
\overline{k}_{j}(x_{j}):=\frac{l_{j}(0)}{|x_{j}|^{1+\alpha}},\quad\overline{\Pi}_{j}(x_{j}):=\In_{\{x_{j}\geq0\}}\int_{x_{j}}^{\infty}\overline{k}_{j}(s)\: ds+\In_{\{x_{j}<0\}}\int_{-\infty}^{x_{j}}\overline{k}_{j}(s)\: ds.
\]
As can be easily checked the conditions \eqref{eq:cond-nu-ge2} hold.
\end{example}

\begin{corollary}
	Let $\pi$ have a symmetric infinitely divisible distribution with a L\'evy-Khintchine triplet $\left(\Sigma,\mu,\nu\right)$ for some positive definite $\Sigma$,
	and let $p \propto |\mathcal{F}[\pi]|$.
	Suppose that $\Sigma$ is positive definite and 
	that $\nu(dx)$ possesses a nonnegative Lebesgue density $\nu(x)$.
	Then the following holds.
\begin{description}
	\item[--] (Original domain) Suppose that $\nu(x)>0$ and $\int_{\R^d} \nu (x)\, dx <\infty,$ then the MALA algorithm for $\pi$ is not geometrically ergodic if 
	\[
	\frac{|\nabla\nu^{\star n}(x)|}{\nu(x)}\leq C^n \zeta (x), \quad n\in \mathbb{N}\cup \{0\}, \quad x\in \mathbb{R}^d,
	\] 
	with some constant $C>0$ and $\zeta (x)\to 0$ as $|x|\to \infty.$ In particular, for all $\nu$ of the form $\nu(x)\propto 1/(1+|x|^2)^p,$ $p\geq 1,$ the MALA algorithm for $\pi$ is not geometrically ergodic.
	\item[--] (Fourier domain) The MALA algorithm for $p$ is geometrically ergodic if \(\int_{|x|\geq 1} |x|^3 \nu(x)\,dx<\infty\). 
\end{description}
\end{corollary}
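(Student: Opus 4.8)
The plan is to deduce both claims from Proposition~\ref{prop:mala-erg}, applied with $\rho=\pi$ for the first item and with $\rho=p\propto|\mathcal{F}[\pi]|$ for the second; the content is in checking the hypotheses.

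\emph{Fourier domain.} Since $\Sigma$ is positive definite, $|\mathcal{F}[\pi](u)|=\exp\{-\tfrac12 u^{\T}\Sigma u+\int_{\R^d}(\cos(x^{\T}u)-1)\,\nu(dx)\}\le e^{-\frac12\lambda_{\min}(\Sigma)|u|^2}$, so $p$ is a well-defined smooth density and $\log p(u)=-\tfrac12 u^{\T}\Sigma u+\int_{\R^d}(\cos(x^{\T}u)-1)\,\nu(dx)+\mathrm{const}$. Differentiating under the integral sign (justified by $\int_{\R^d}|x|^2\nu(dx)<\infty$ together with $|x\sin(x^{\T}u)|\le|x|^2|u|$ and $|xx^{\T}\cos(x^{\T}u)|\le|x|^2$) gives $\nabla\log p(u)=-\Sigma u-\int x\sin(x^{\T}u)\,\nu(dx)$, $\nabla^2\log p(u)=-\Sigma-\int xx^{\T}\cos(x^{\T}u)\,\nu(dx)$ and $\mathrm{D}^3\log p(u)=\int x^{\otimes3}\sin(x^{\T}u)\,\nu(dx)$. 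I would then verify the three sufficient conditions of Proposition~\ref{prop:mala-erg}: (a) holds because $\|\nabla^2\log p(u)\|\le\|\Sigma\|+\int|x|^2\nu(dx)<\infty$; (c) holds because $|\mathrm{D}^3\log p(u)|\le\int_{|x|<1}|x|^3\nu(dx)+\int_{|x|\ge1}|x|^3\nu(dx)<\infty$ uniformly in $u$, the first term being dominated by $\int_{|x|<1}|x|^2\nu(dx)$ and the second finite by assumption; and (b) holds because $xx^{\T}\nu(x)\in L^1(\R^d)$, so the Riemann--Lebesgue lemma gives $\int xx^{\T}\cos(x^{\T}u)\,\nu(x)\,dx\to0$ as $|u|\to\infty$ and hence $-\nabla^2\log p(u)=\Sigma+\int xx^{\T}\cos(x^{\T}u)\nu(x)\,dx\succeq\tfrac12\lambda_{\min}(\Sigma)\mathrm{I}_d$ outside a large ball. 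Smoothness of $p$ and polynomial growth of $\nabla\log p$ are immediate, and Proposition~\ref{prop:mala-erg} then yields geometric ergodicity of MALA for $p$.

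\emph{Original domain.} By symmetry $\mu=0$; with $\lambda:=\int_{\R^d}\nu(x)\,dx<\infty$, $\phi$ the $\mathcal N(0,\Sigma)$ density and $\nu^{\star n}$ the $n$-fold convolution of $\nu$ ($\nu^{\star0}=\delta_0$), we have $\pi(x)=e^{-\lambda}\sum_{n\ge0}\frac1{n!}(\phi\ast\nu^{\star n})(x)$, so $\pi\in C^\infty$. By the necessary condition in Proposition~\ref{prop:mala-erg} it suffices to show $\nabla\log\pi(x)\to0$ as $|x|\to\infty$. Differentiating termwise, using $\nabla(\phi\ast\nu^{\star n})=\phi\ast\nabla\nu^{\star n}$ for $n\ge1$ and the hypothesis $|\nabla\nu^{\star n}(x)|\le C^n\zeta(x)\nu(x)$, one gets
\[
|\nabla\pi(x)|\ \le\ e^{-\lambda}|\nabla\phi(x)|+e^{-\lambda}(e^{C}-1)\,(\phi\ast(\zeta\nu))(x),
\qquad
\pi(x)\ \ge\ e^{-\lambda}(\phi\ast\nu)(x),
\]
so $|\nabla\log\pi(x)|\le |\nabla\phi(x)|/(\phi\ast\nu)(x)+(e^{C}-1)(\phi\ast(\zeta\nu))(x)/(\phi\ast\nu)(x)$. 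Now the pointwise bound $|\nabla\log\nu(x)|\le C\zeta(x)\to0$ (the case $n=1$) forces $\nu(x)\ge c_\eps e^{-\eps|x|}$ for every $\eps>0$, so $(\phi\ast\nu)(x)\ge c\inf_{|y-x|\le1}\nu(y)$ decays at most exponentially whereas $|\nabla\phi(x)|=|\Sigma^{-1}x|\phi(x)=O(|x|e^{-c'|x|^2})$; thus the first ratio tends to $0$. For the second, given $\eta>0$ choose $R$ with $\zeta\le\eta$ on $\{|y|>R\}$ and split $(\phi\ast(\zeta\nu))(x)\le\|\zeta\|_\infty\int_{|y|\le R}\phi(x-y)\nu(y)\,dy+\eta(\phi\ast\nu)(x)$; for $|x|>2R$ the first integral is $\le\lambda\sup_{|y|\le R}\phi(x-y)=O(e^{-c'|x|^2})$, negligible against the at-most-exponentially-small $(\phi\ast\nu)(x)$, whence $\limsup_{|x|\to\infty}(\phi\ast(\zeta\nu))(x)/(\phi\ast\nu)(x)\le\eta$, and $\eta\downarrow0$ completes the argument. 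For $\nu(x)\propto(1+|x|^2)^{-p}$ one checks the hypothesis directly: $|\nabla\log\nu(x)|=O(1/|x|)$, and since $\nabla\nu$ is odd with $\int\nabla\nu=0$, the identity $\nabla\nu^{\star n}=(\nabla\nu)\ast\nu^{\star(n-1)}$ together with the regular variation of $\nu$ gives $|\nabla\nu^{\star n}(x)|=O(\lambda^{n-1}\nu(x)/|x|)$, i.e.\ $\zeta(x)=O(1/|x|)$.

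I expect the comparison $(\phi\ast(\zeta\nu))(x)=o\bigl((\phi\ast\nu)(x)\bigr)$ in the original-domain part to be the main obstacle: this is precisely where one must use the heaviness of $\nu$ — extracted from the hypothesis via the bound $\nu(x)\ge c_\eps e^{-\eps|x|}$ — so that Gaussian smoothing localises $\phi\ast\nu$ near $x$, where $\zeta$ is small, instead of being dominated by the contribution of a fixed neighbourhood of the origin. A secondary point is that Proposition~\ref{prop:mala-erg} is to be applied to $p$ with only $C^3$ regularity in mind (only derivatives up to order three enter conditions (a)--(c)), since $\int_{\R^d}|x|^k\nu(dx)$ may diverge for $k\ge4$.
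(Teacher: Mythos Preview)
Your Fourier-domain argument is essentially the paper's: both compute $\nabla\log p$, $\nabla^2\log p$, $\mathrm{D}^3\log p$ explicitly and verify conditions (a)--(c) of Proposition~\ref{prop:mala-erg} via the bound $\int|x|^2\nu(dx)<\infty$, the Riemann--Lebesgue lemma, and the third-moment hypothesis respectively.

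For the original-domain part you take a genuinely different, and more complete, route. The paper simply writes $\pi(x)=e^{-\lambda}\sum_{n\ge0}\frac{\lambda^n}{n!}\nu^{\star n}(x)$ as a \emph{pure} compound Poisson, bounds the denominator by the $n=1$ term $\pi(x)\ge e^{-\lambda}\nu(x)$, and gets in one line
\[
|\nabla\log\pi(x)|\le\sum_{n\ge0}\frac{\lambda^n}{n!}\,\frac{|\nabla\nu^{\star n}(x)|}{\nu(x)}\le e^{\lambda C}\zeta(x)\to0.
\]
This ignores the Gaussian component, even though the corollary is stated for positive definite $\Sigma$. You instead keep the convolution with $\phi=\mathcal N(0,\Sigma)$ throughout, which forces you to control the ratios $|\nabla\phi|/(\phi*\nu)$ and $(\phi*(\zeta\nu))/(\phi*\nu)$; the key extra idea --- deducing the sub-exponential lower bound $\nu(x)\ge c_\eps e^{-\eps|x|}$ from $|\nabla\log\nu|\le C\zeta\to0$ so that the Gaussian numerator is negligible --- has no counterpart in the paper. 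The paper's approach buys brevity (and is correct as written if one reads the first item with $\Sigma=0$); yours buys consistency with the stated hypothesis $\Sigma\succ0$, at the price of the localisation step you identified as the main obstacle.
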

\begin{proof}
The first statement follows from the fact that $\pi$ has a compound Poisson distribution, and hence
\begin{eqnarray*}
	\pi(x)=e^{-\lambda}\sum_{n=0}^{\infty}\frac{\lambda^{n}}{n!}\nu^{\star n}(x)
	\quad\text{with}\quad\lambda=\int_{\R^d} \nu (x)\, dx.
\end{eqnarray*}
Therefore, 
\begin{eqnarray*}
|\nabla \log \pi (x)| = \frac{\left|\nabla\pi(x)\right|}{\pi(x)}\leq\sum_{n=0}^{\infty}\frac{\lambda^{n}}{n!}\frac{\left|\nabla\nu^{\star n}(x)\right|}{\nu(x)}\leq e^{\lambda C} \zeta(x) \to 0,
\end{eqnarray*}
and due to Proposition~\ref{prop:mala-erg} the MALA  algorithm for $\pi$ is not exponentially ergodic.
To prove the second statement, we need to check the conditions from Proposition~\ref{prop:mala-erg}.
We have for any \(u,v\in \mathbb{R}^d,\) 
\begin{align*}
	| \nabla\log p(u) - \nabla\log p(v) |  &=   \left| \Sigma (u-v)  + \int_{\R^d} x  (\sin (u^{\top}x) - \sin(v^{\top}x))\, dx\right| \\
		&\leq \left(\sigma_{\max}(\Sigma) + \int_{\R^d} |x|^2 \nu(x) \,dx \right) |u-v|,
\end{align*}
where $\sigma_{\max}(\Sigma)$ denotes the largest singular value of $\Sigma$. Hence $\log p(u)$ has Lipschitz continuous gradient and 
the condition (a) is verified. Furthermore, 
for any \(u,v\in \mathbb{R}^d,\) 
\begin{align*}
	\langle-\nabla^2 \log p(u) v , v \rangle
	&=v^\top\Sigma v +\int_{\mathbb{R}^{d}}\left|v^{\top}x\right|^{2}\cos(u^{\top}x) \nu(x)\,dx\\
	&\geq -\left(\sigma_{\min}(\Sigma)+\int_{\mathbb{R}^{d}}\cos(u^{\top}x) \frac{\left|v^{\top}x\right|^{2}}{|v|^2} \nu(x)\, dx\right)|v|^2,
\end{align*}
where $\sigma_{\min}(\Sigma)$ denotes the smallest singular value of $\Sigma$.
By assumption, $|v^{\top}x|^{2}\nu(x)/|v|^2  \in L^1(\R^d)$ for any $v\in\R^d$. Therefore, by the Riemann-Lebesgue lemma,
$\int_{\mathbb{R}^{d}}\cos(u^{\top}x) |v^{\top}x|^{2} \nu(x) / |v|^2 \, dx \to 0$ as $|u|\to\infty$. Hence 
$-\nabla^2 \log p(x)$ is strongly convex for large $u$ and the condition (b) is verified. Finally, boundness of the third-order derivates, the condition (c), 
follows directly from the assumption. The proof is complete.
\end{proof}
\section{Numerical study}
\label{sec:num}
In what follows, we consider three numerical examples:
(1) Monte Carlo methods in Original and Fourier domains, 
(2) MCMC algorithms in Original and Fourier domains, and
(3) European Put Option Under CGMY Model.
The purpose of the following examples is to support the idea that 
moving to Fourier domain might give benefits even in the case when 
the target density $\pi$ is known in a closed form but has heavy tails.

\subsection{Monte Carlo in Original and Fourier domains}

First we compare vanilla Monte Carlo in both domains 
by estimating an expectation $V=\E_\pi[g]$ with respect to the elliptically contoured stable distribution
$\pi$ for various $\alpha$. We consider a function $g$ with its Fourier transform $\mathcal{F}[g]$ given by
\begin{align}\label{num:gfunc}
	g(x) = \prod_{i=1}^d \sech\left(\sqrt\frac{\pi}{2}x_i\right)
	\quad\text{and}\quad
	\mathcal{F}[g](u) = \left(2\pi\right)^{d/2}\prod_{i=1}^d \sech\left(\sqrt\frac{\pi}{2}x_i\right),
\end{align}
where we remind that $\sech(t) = 2/(e^{t}+e^{-t})$.
This choice stems from the fact that $\sech$ is an
eigenfunction for the Fourier Transform operator. 
Hence we will compute expectation of similar functions in 
the both domains, which will make this experiment fair.
In Original domain, we estimate $V$ with $\frac{1}{n}\sum_{i=1}^n g(X_i)$, 
where $X_1,\ldots,X_n$ is an independent sample from $\pi$. Methods to sample from 
elliptically contoured stable distribution are described in \cite{nolan2013}.
In Fourier domain, we use representation \eqref{imp_sample} and 
estimate $V$ with $\frac{1}{n}\frac{C_p}{(2\pi)^d}\sum_{i=1}^n \mathcal{F}[g](X_i)\mathcal{F}[\pi](X_i)/|\mathcal{F}[\pi](X_i)|$,
where now $X_1,\ldots,X_n$ is an independent sample from $p\propto|\mathcal{F}[\pi]|$,
which is called multivariate exponential power distribution.
The normalizing constant $C_p$ can be computed directly, $C_p=\sqrt{2/\alpha}\Gamma(d/\alpha)/(2^d \pi^{d/2}\Gamma(d/2) \det(\Sigma))$,
where $\Gamma$ stands for the Gamma function.
\par
We consider $d=5$ and $d=10$. 
We let $\mu=0$ and $\Sigma = U^\T D U$,
where $U$ is a random rotation matrix and $D$ is a diagonal matrix with numbers from $1$ to $d$ on the diagonal. 
We compute $100$ estimates based on samples  of size $n=100\,000$.
The spread of this estimates for elliptically contoured stable distribution (ECSD) and multivariate exponential power distribution (MEPD)
is given in Figure~\ref{fig:mc1} and Figure~\ref{fig:mc2}.
We see that the idea of moving to Fourier domain is reasonable ---
since samples from MEPD have lower variance, we obtain better estimates for $V$.

\begin{figure}[htb]
\includegraphics[width=0.196\linewidth]{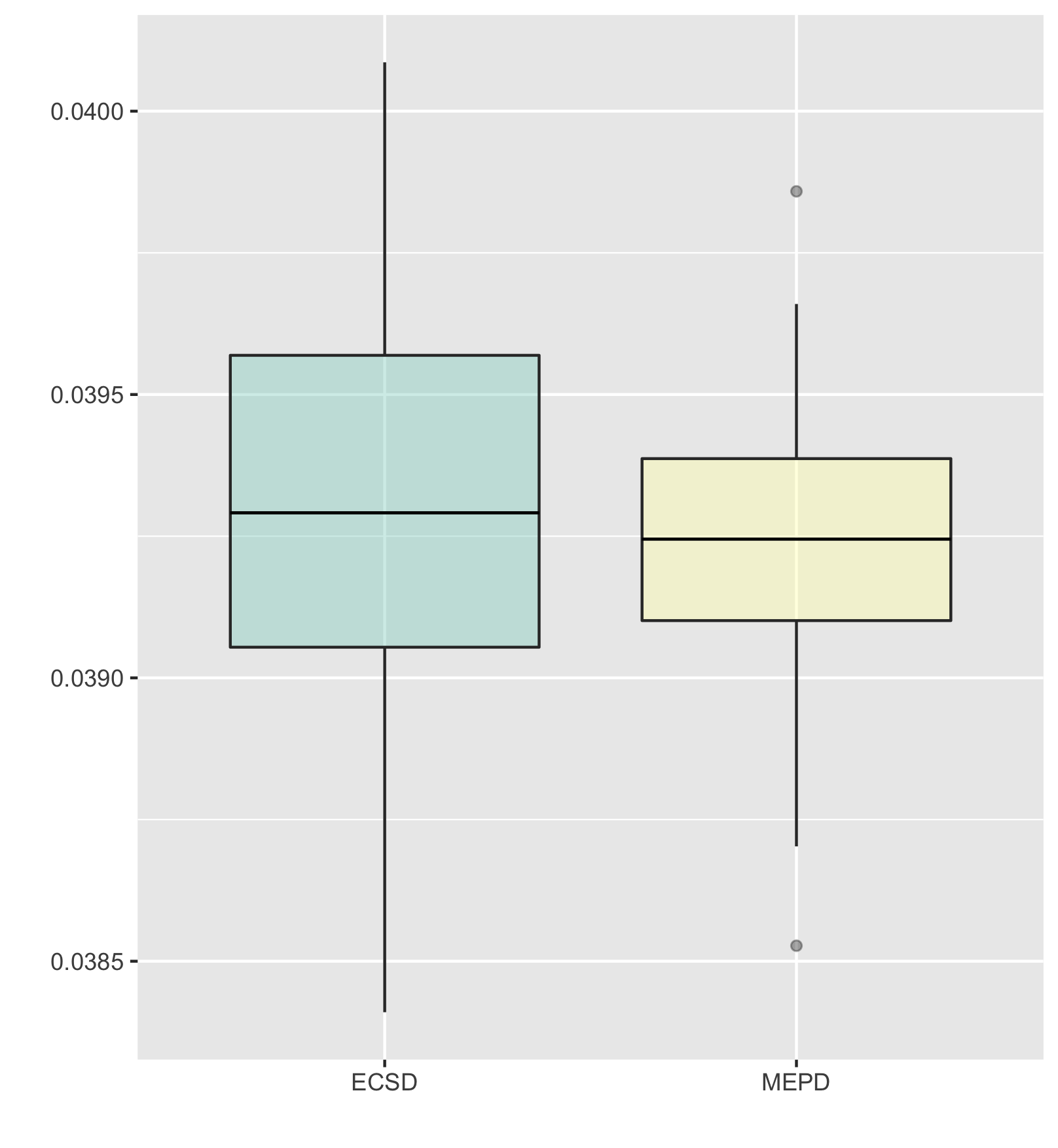}
\includegraphics[width=0.196\linewidth]{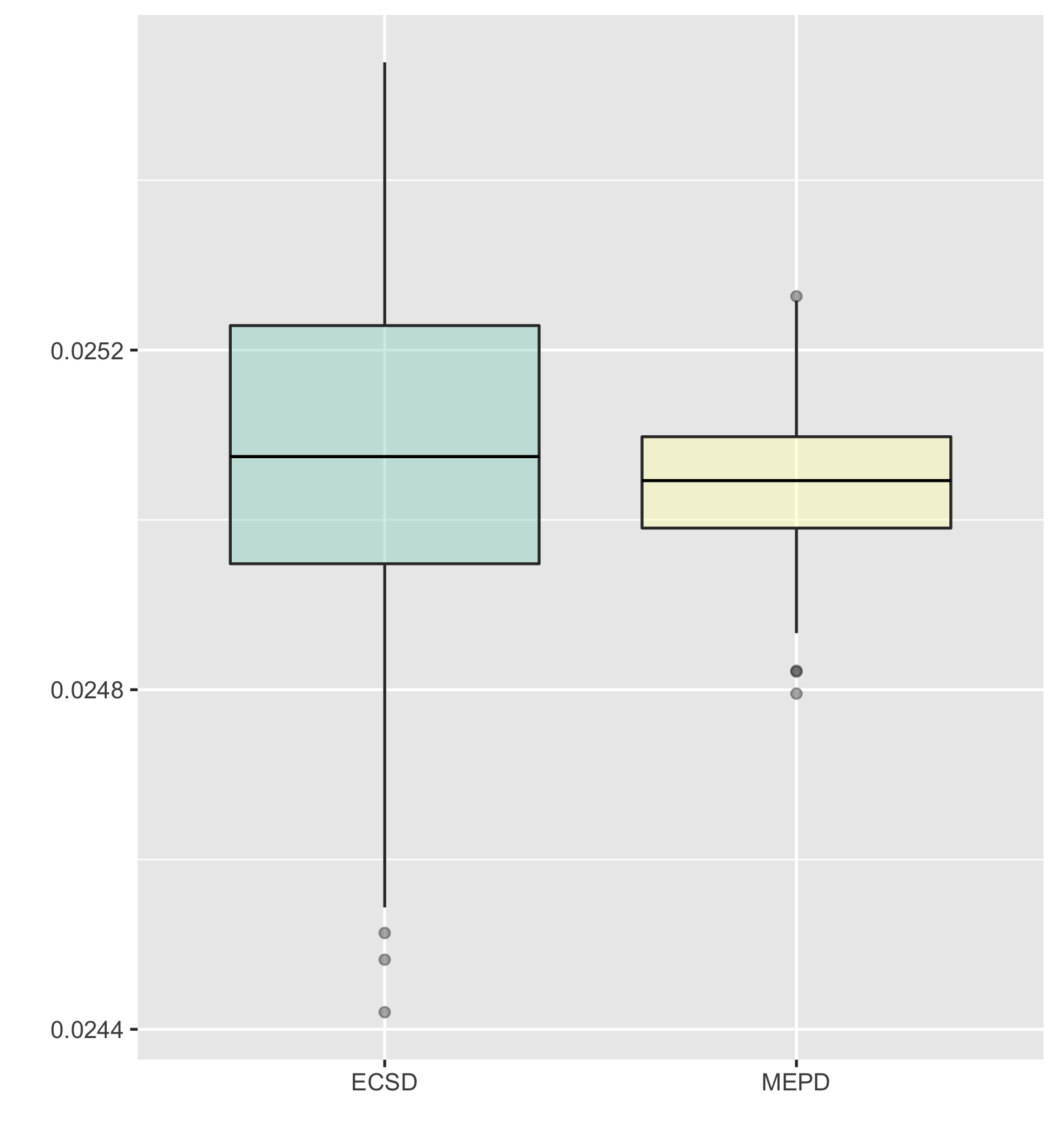}
\includegraphics[width=0.196\linewidth]{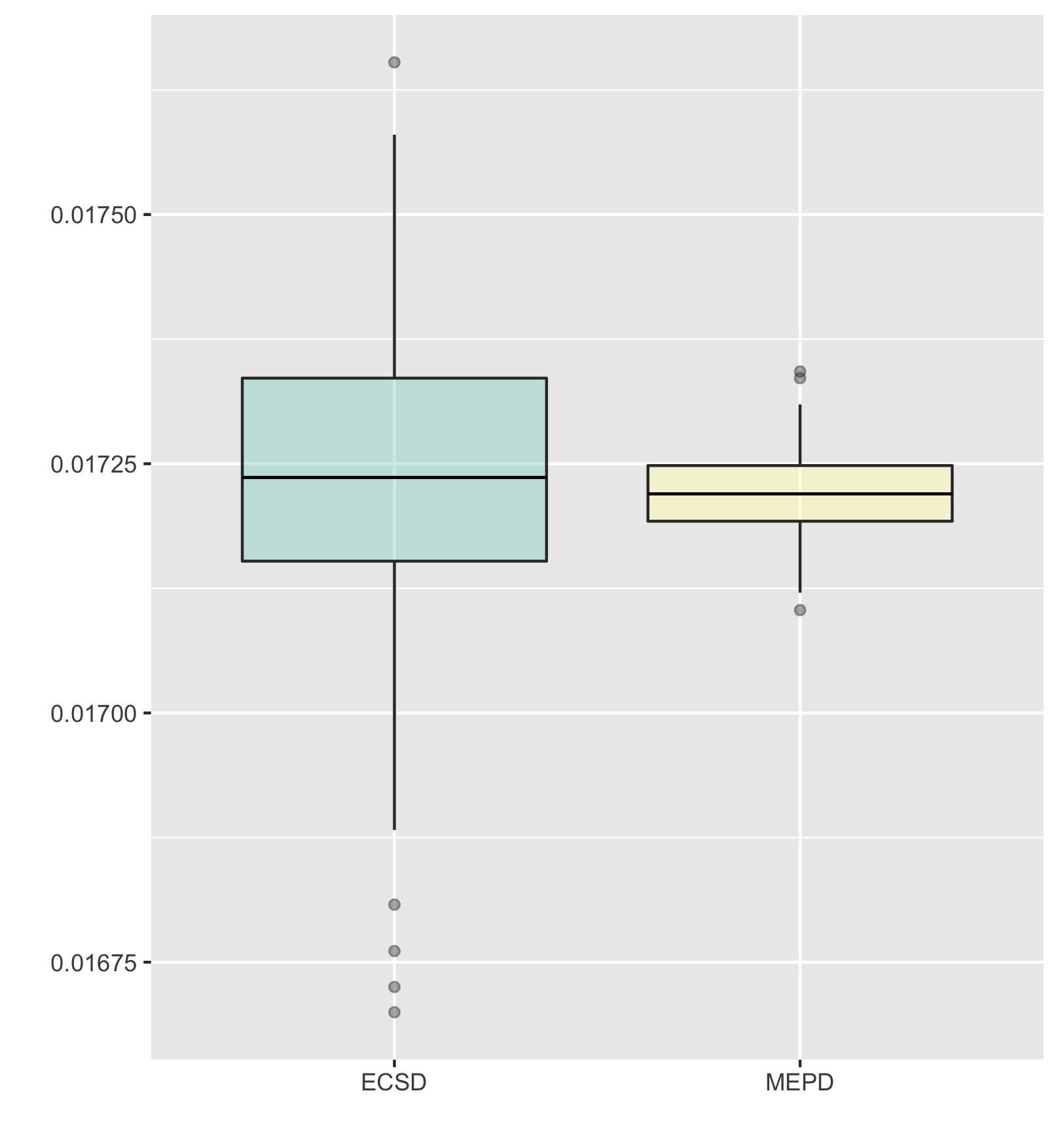}
\includegraphics[width=0.196\linewidth]{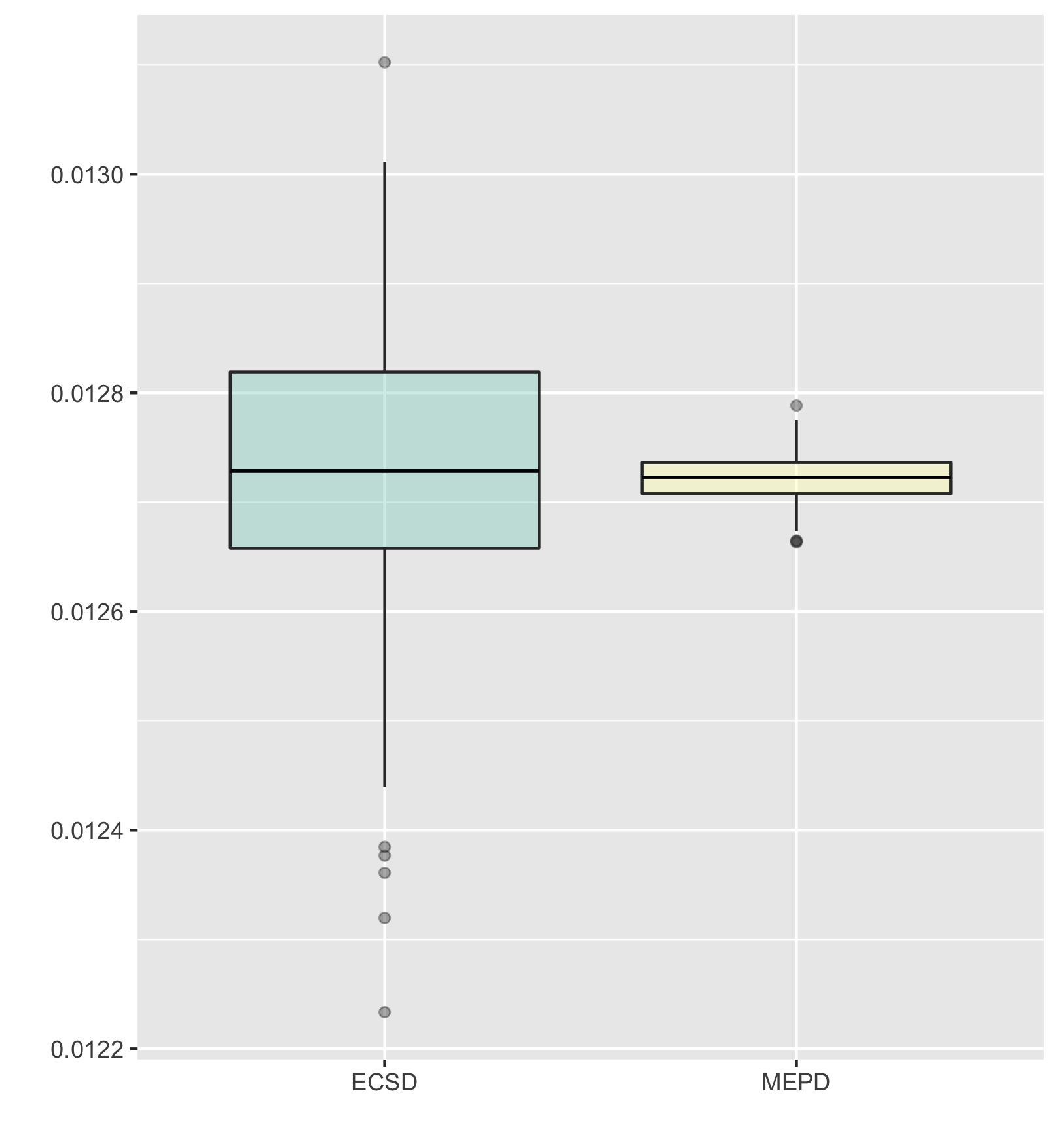}
\includegraphics[width=0.196\linewidth]{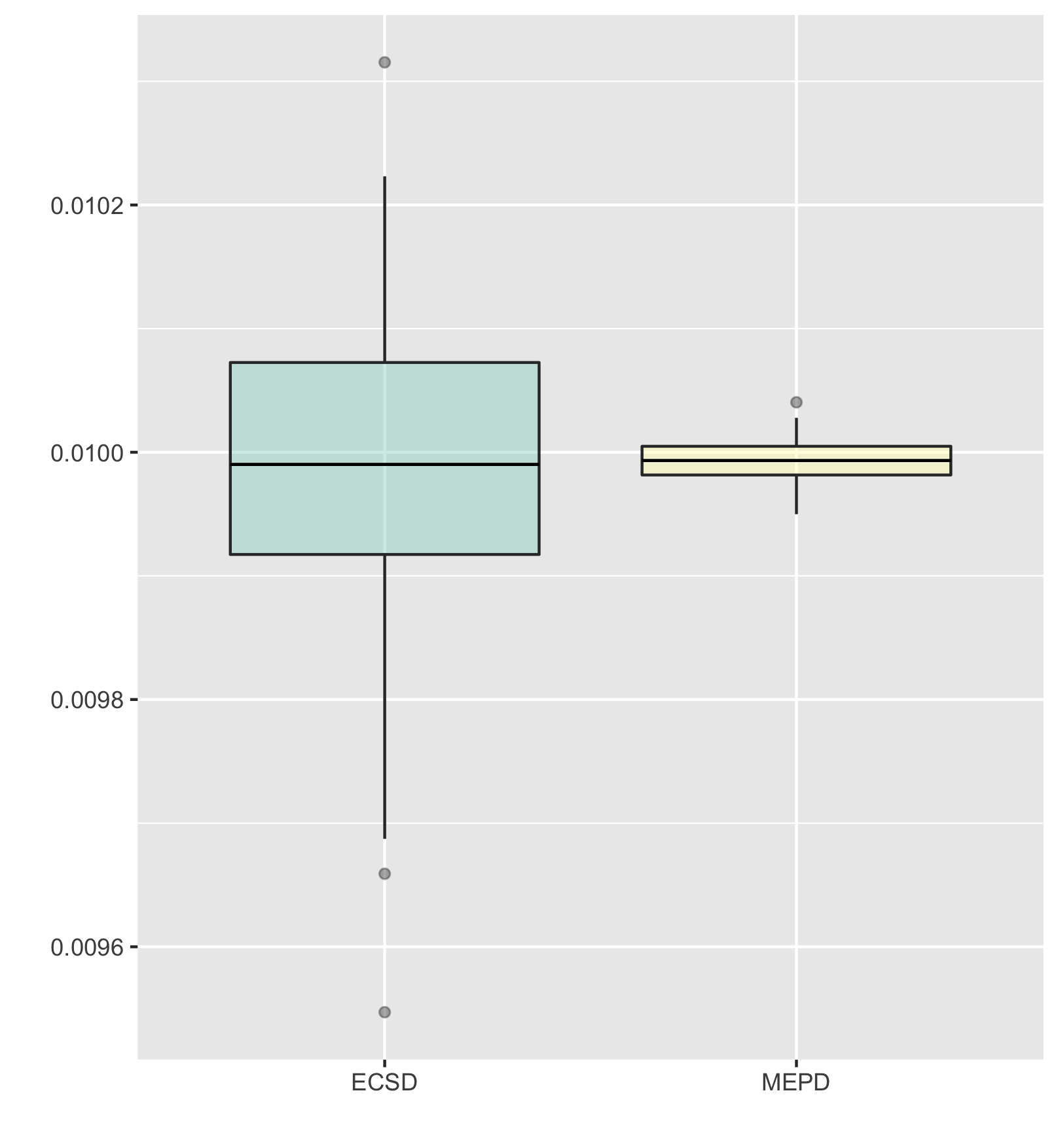}
\caption{\label{fig:mc1} Monte Carlo estimation of $\E_\pi[g]$ in $d=5$ for elliptically contoured stable distribution $\pi$ with $\alpha = 1,\, 1.2,\, 1.4,\, 1.6,\, 1.8$.}
\end{figure}

\begin{figure}[htb]
\includegraphics[width=0.196\linewidth]{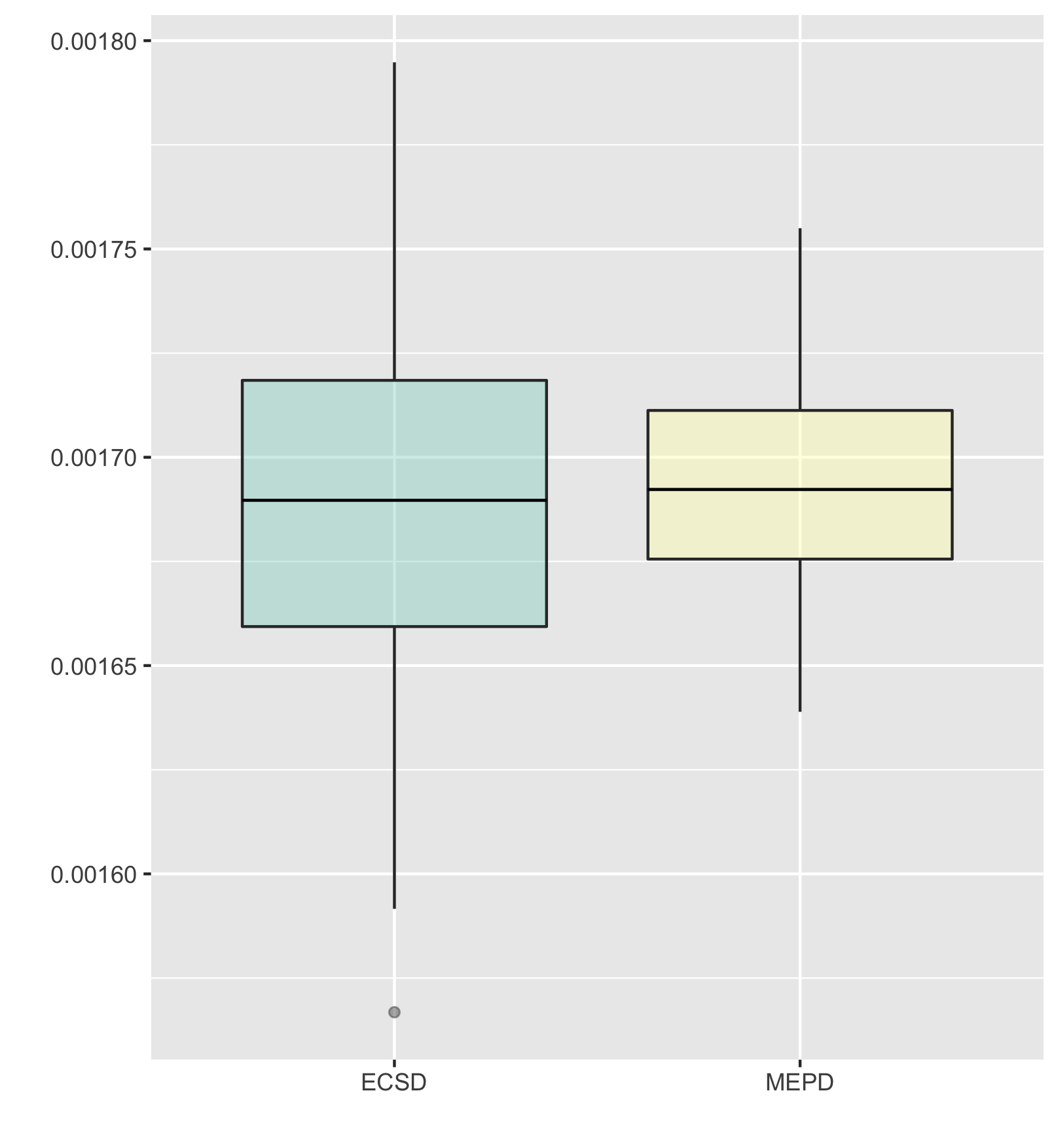}
\includegraphics[width=0.196\linewidth]{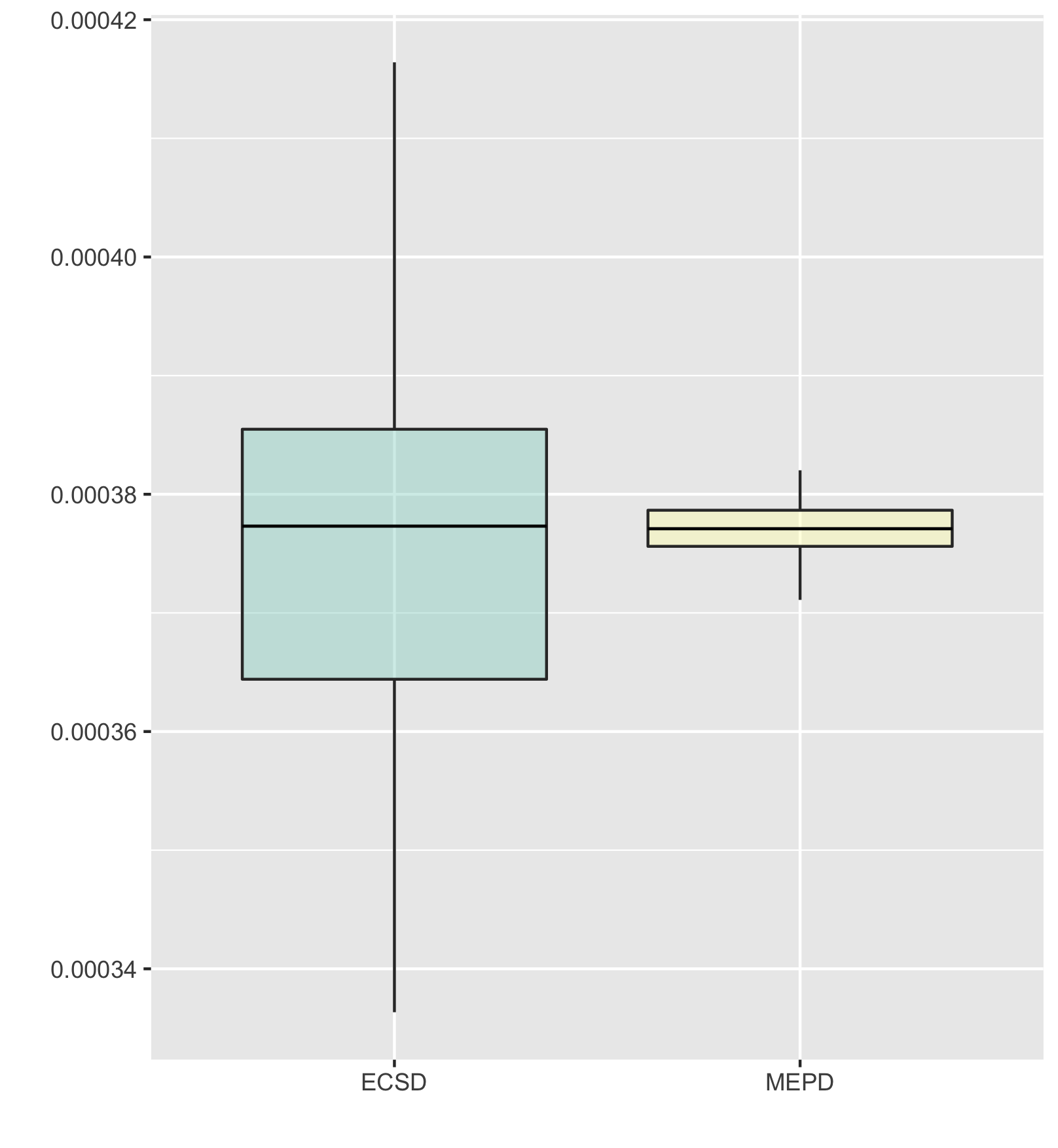}
\includegraphics[width=0.196\linewidth]{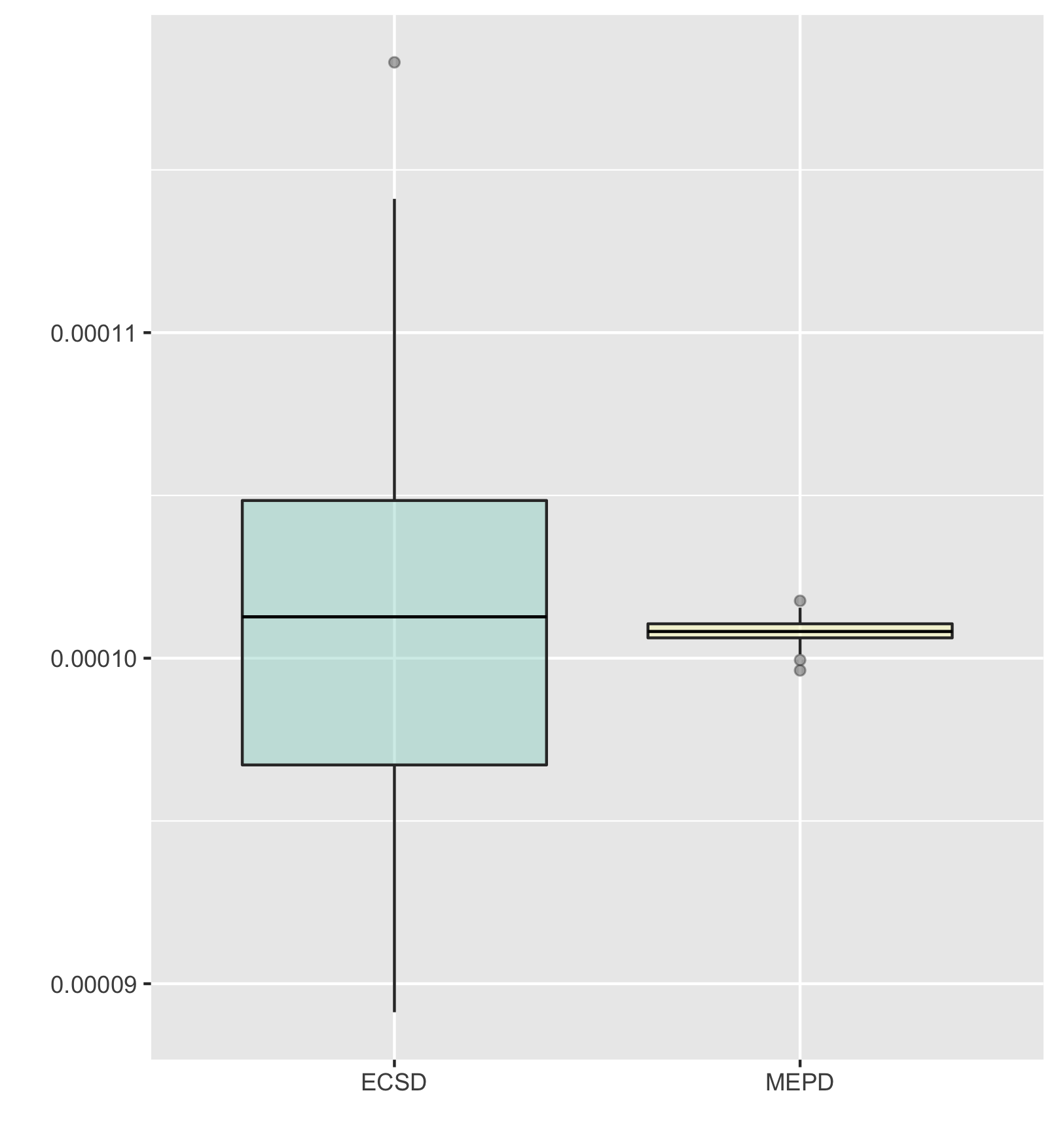}
\includegraphics[width=0.196\linewidth]{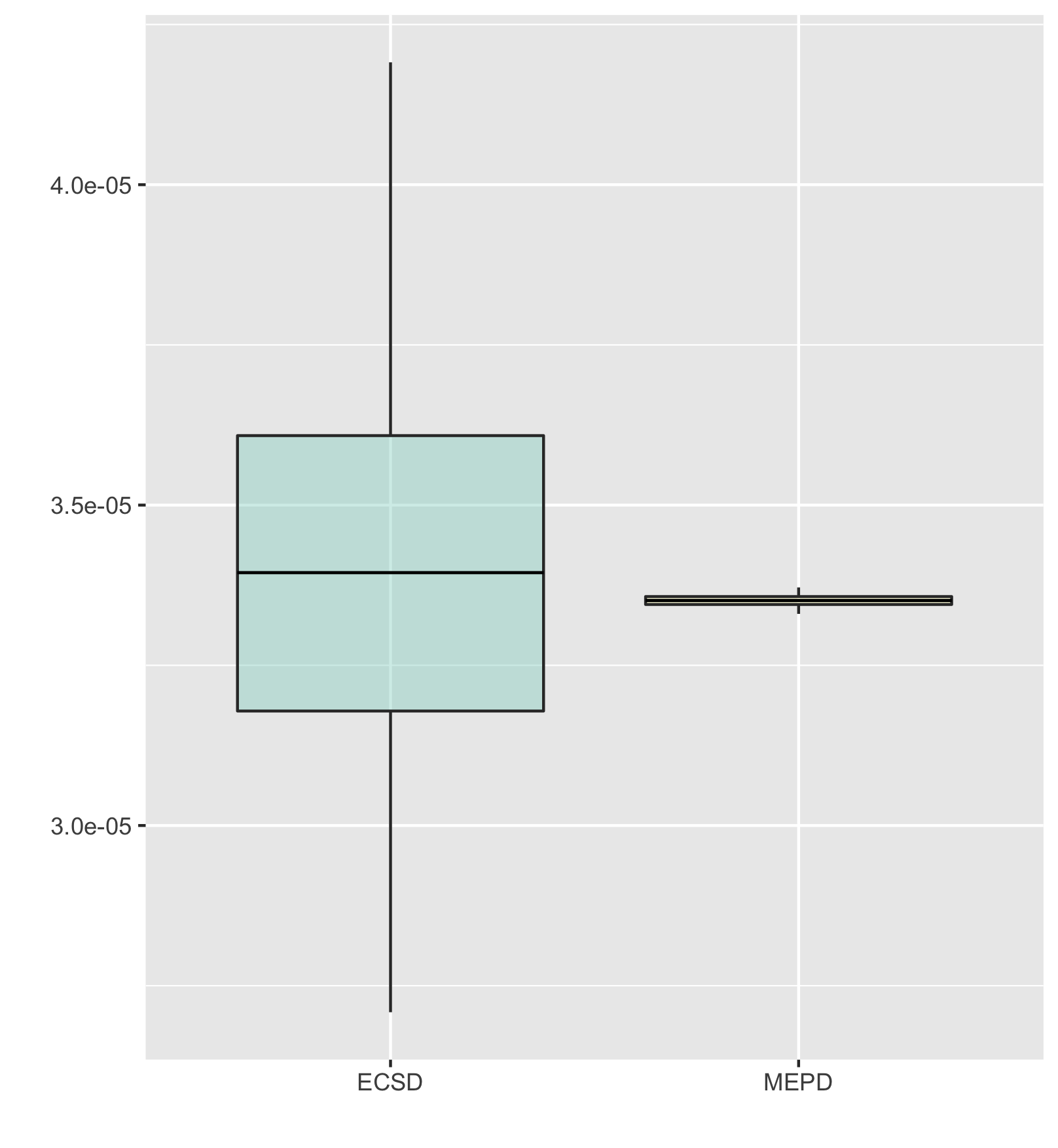}
\includegraphics[width=0.196\linewidth]{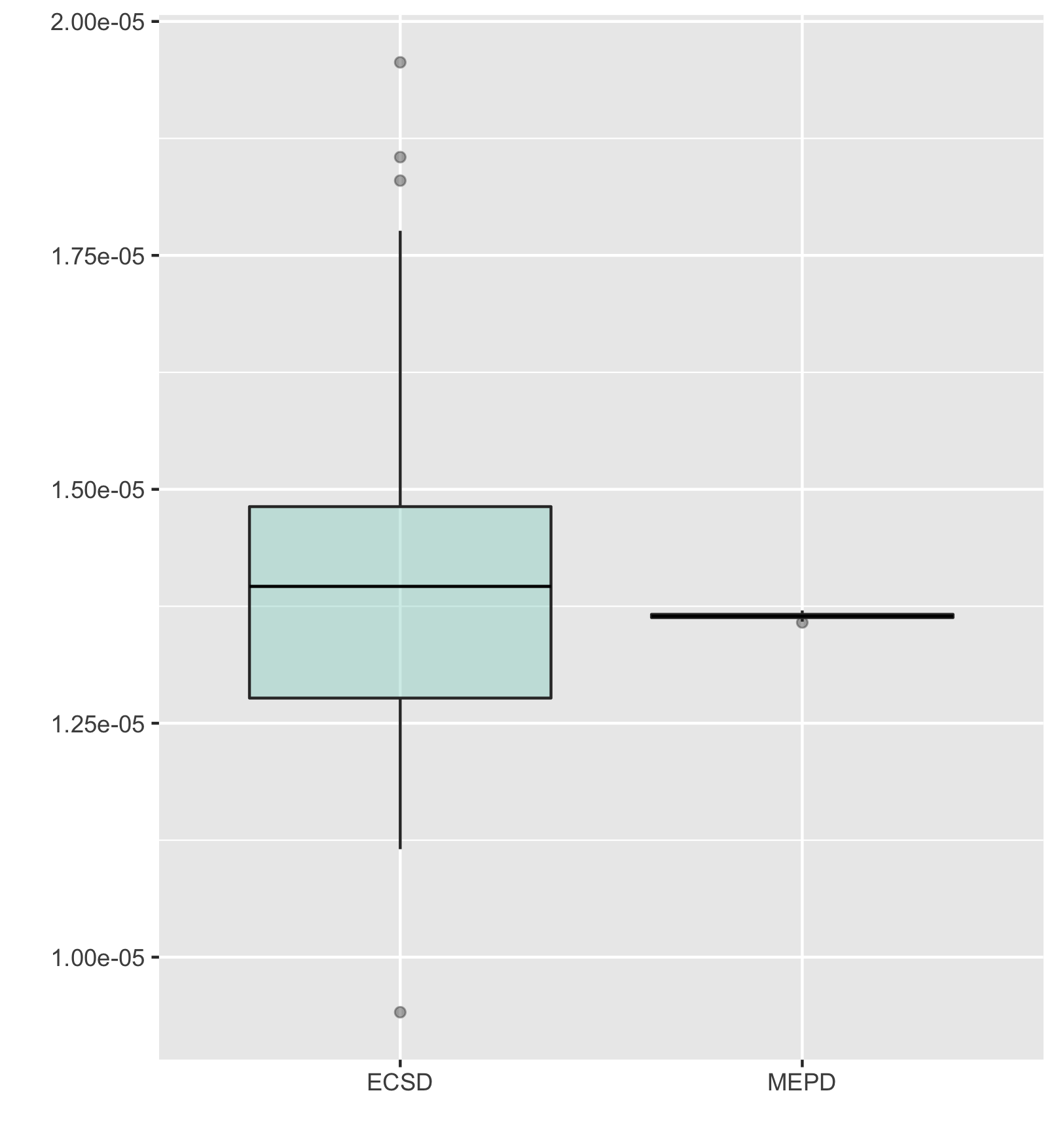}
\caption{\label{fig:mc2} Monte Carlo estimation of $\E_\pi[g]$ in $d=10$ for elliptically contoured stable distribution $\pi$ with $\alpha = 1,\, 1.2,\, 1.4,\, 1.6,\, 1.8$.}
\end{figure}

\subsection{MCMC in Original and Fourier domains}
Here we consider a specific case of elliptically contoured stable distributions with $\alpha = 1$, the Cauchy distribution.
In this case, the density $\pi$ is no longer intractable and is given by
\begin{align*}
	\pi(x) = \frac{\Gamma(\frac{d+1}{2})}{\pi^{\frac{d+1}{2}}(\det(\Sigma))^{\frac{1}{2}}} \,
		\frac{1}{(1+(x-\mu)^{\T}\Sigma^{-1}(x-\mu))^{\frac{d+1}{2}}},
\end{align*}
where, we recall, $\mu\in\R^d$ is a shift vector and $\Sigma$ is a $d \times d$ positive-definite matrix.
Its Fourier transform is given by
\begin{align*}
	\mathcal{F}[\pi](u) =  \exp\left( - (u^{\T}\Sigma u)^{1/2} + i u^{\T}\mu \right).
\end{align*}
Our aim is still an estimation of $V=\E_\pi[g]$ for the same function $g$ given in \eqref{num:gfunc},
but now we will use MCMC algorithms. 
In Original domain, we estimate $V$ with $\frac{1}{n}\sum_{i=1}^n g(X_i)$, 
where $X_1,\ldots,X_n$ is a Markov chain generated by MALA, MHRW or MHIS algorithms with the target distribution $\pi$. 
In Fourier domain, we 
estimate $V$ with $\frac{1}{n}\frac{C_p}{(2\pi)^d}\sum_{i=1}^n \mathcal{F}[g](X_i)\mathcal{F}[\pi](X_i)/|\mathcal{F}[\pi](X_i)|$, see \eqref{imp_sample},
where now $X_1,\ldots,X_n$ is a Markov chain generated by the same MCMC algorithms with the target distribution $p\propto|\mathcal{F}[\pi]|$.
\par
The experiment is organized as follows. 
We chose $\mu=0$ and $\Sigma = U^\T D U$,
where $U$ is a random rotation matrix and $D$ is a diagonal matrix with numbers from $0.2$ to $0.2d$ on the diagonal. 
The matrix $\Sigma$ is chosen so to prevent large values for $C_{p}$.
We start with computing of a gold estimate for $V$.
This is done by averaging $100$ vanilla Monte Carlo estimates of size $n=100\,000$.
For MCMC algorithms, we generate $100$ independent trajectories of size $n=105\,000$,
where the first $n=5\,000$ steps are discarded as a burn-in. For both MHRW and MHIS we use the normal proposal.
Parameters for MCMC algorithms are chosen adaptively by minimizing the MSE between the gold estimate and
$100$ estimates computed for each trajectory. The resulting parameters can be viewed as a best possible
parameters, and the performance of an MCMC algorithm as a best possible. Once parameters are estimated, 
we generate $100$ new independent trajectories of the same size and compute estimates of $V$ for 
each of them. The relative error of these estimates (with respect to the gold estimate) for $d=5,\,10,\,15$ is shown on boxplots in Fugure~\ref{fig:mcmc1}. 
We see that MCMC algorithms do not work when the target density 
has heavy tails, hence moving to Fourier domain might be the only possible option (if, for example, there is no direct sampling algorithm from $\pi$).
\begin{figure}[htb]
\includegraphics[width=0.33\linewidth]{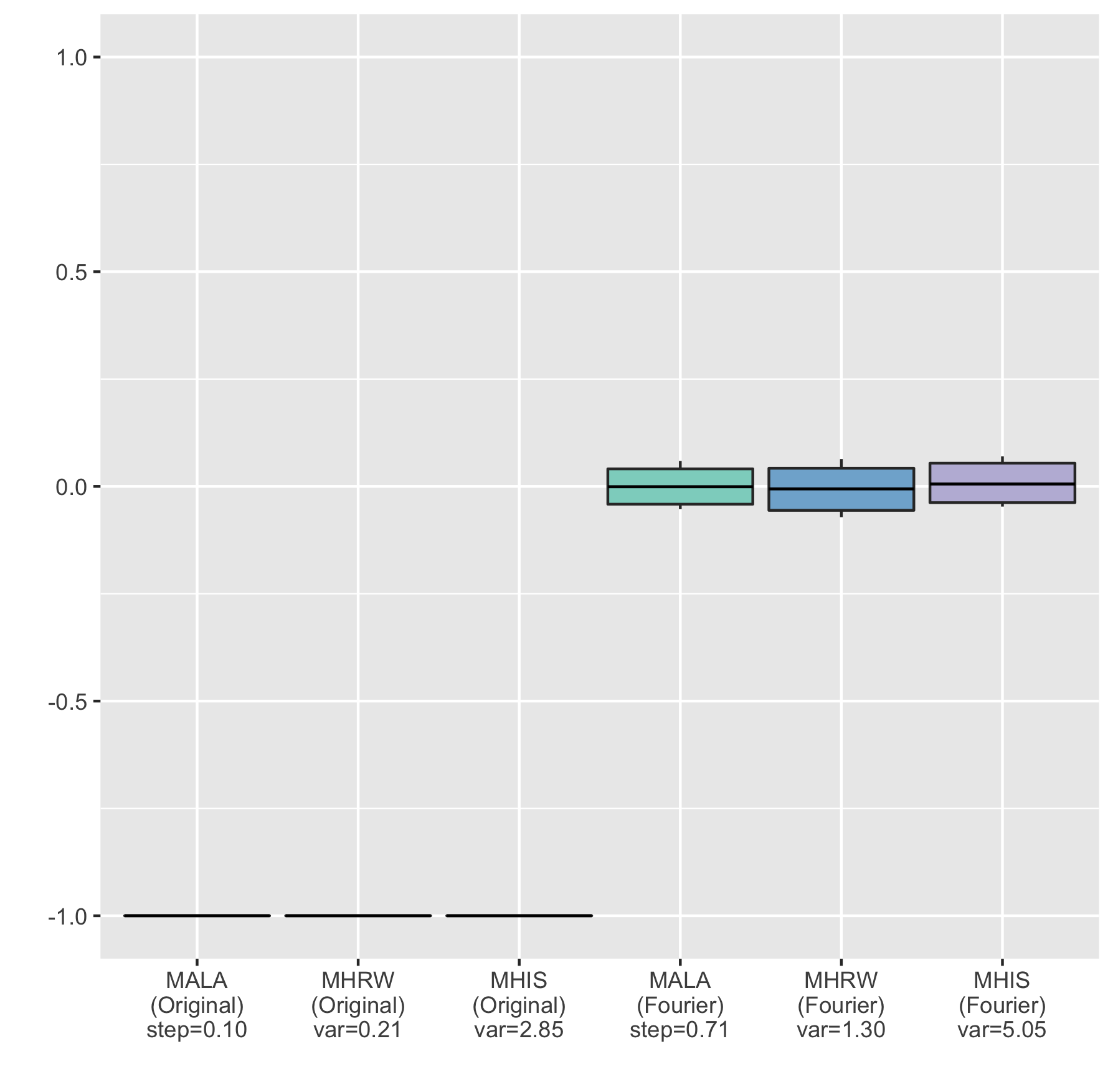}
\includegraphics[width=0.33\linewidth]{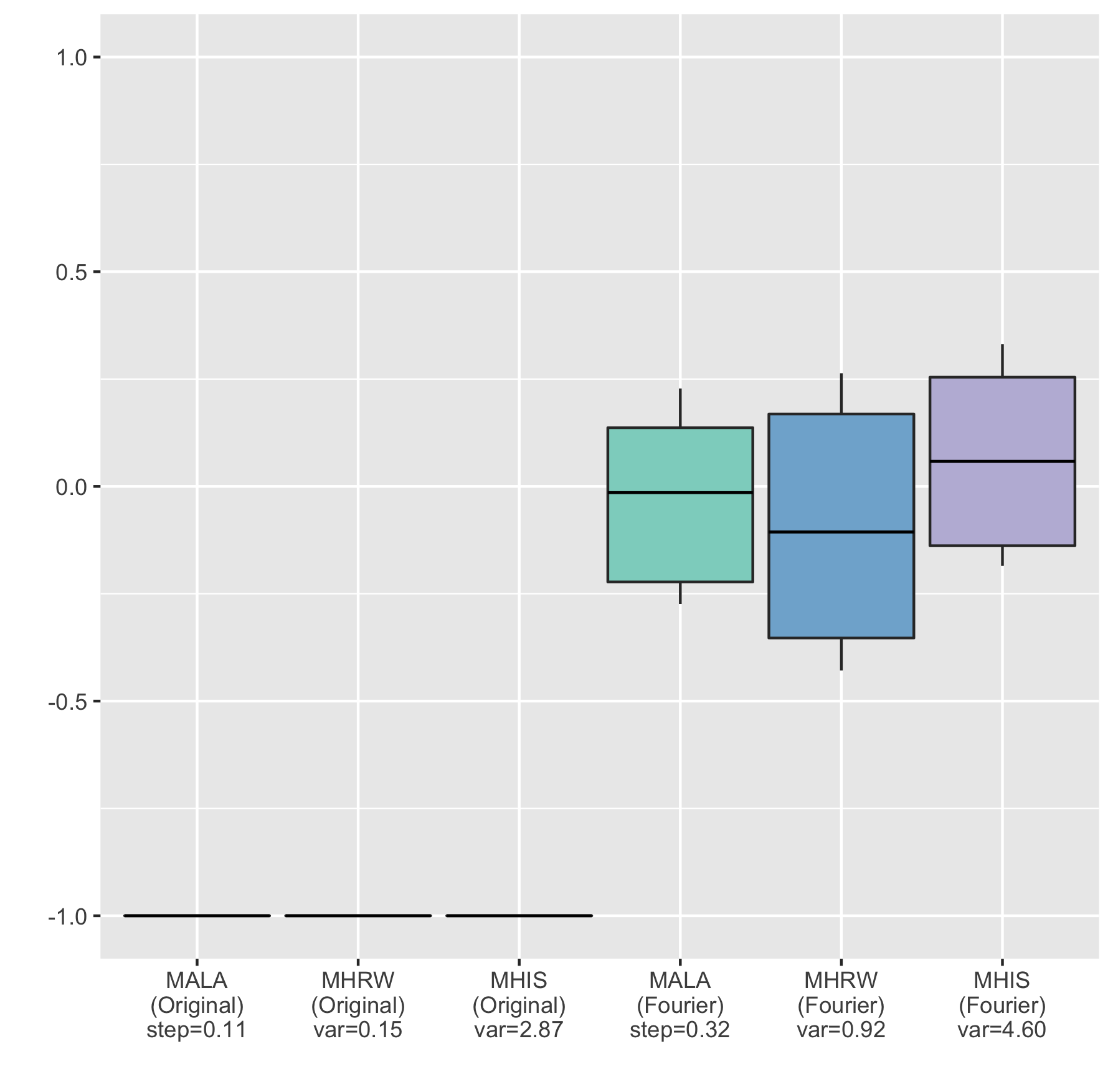}
\includegraphics[width=0.33\linewidth]{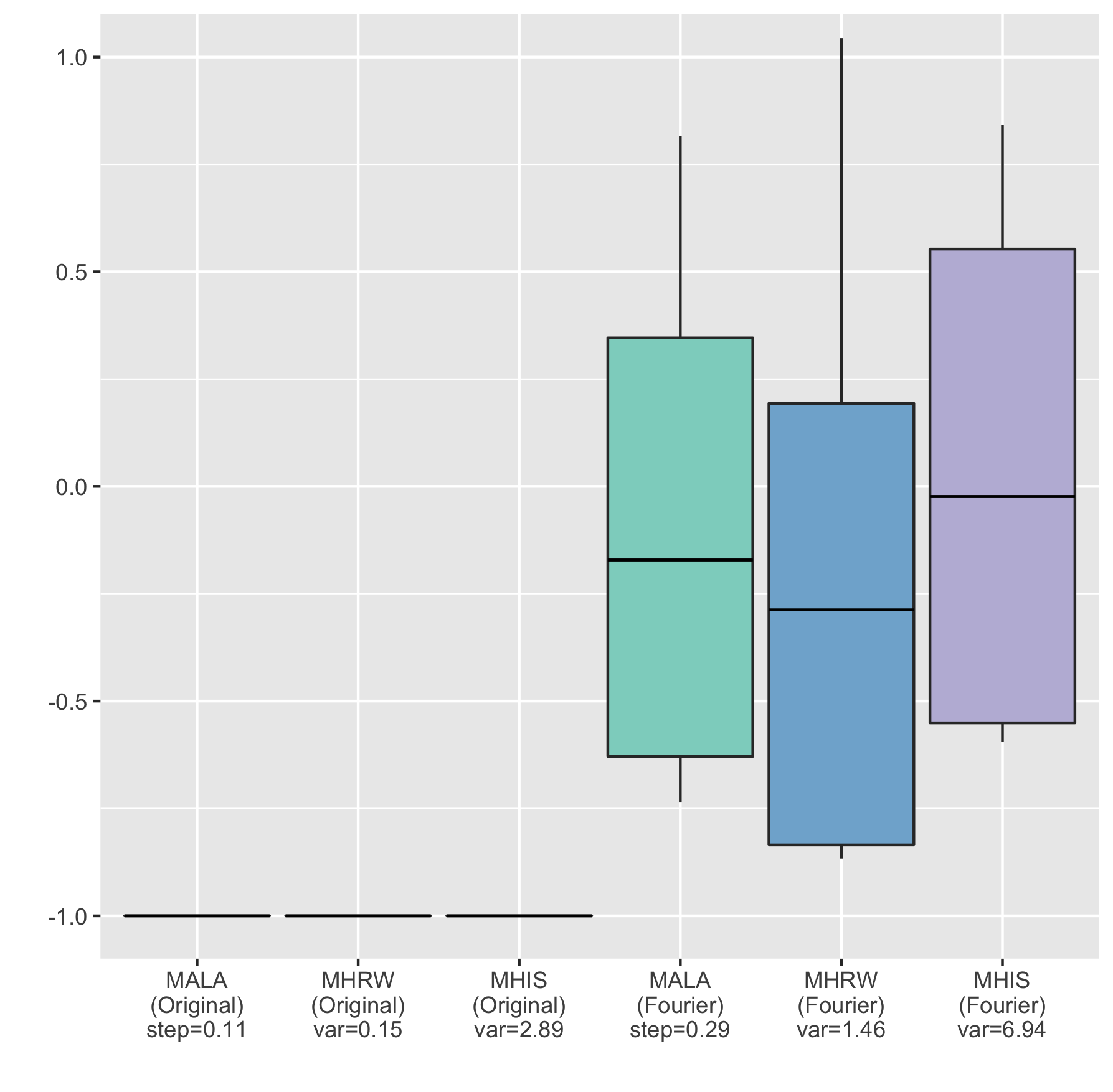}
\caption{\label{fig:mcmc1} MCMC estimation of $\E_\pi[g]$ in $d=5,\,10,\,15$ for the Cauchy distribution.}
\end{figure}

\subsection{European Put Option Under CGMY Model}
Now we consider a Financial example from  \cite{belomestny16}.
The CGMY process $\{X_t\}_{r\ge0}$ is a pure jump process with the L\'evy measure
\[
	\nu_{\CGMY}(x) = C \biggl[ \frac{e^{Gx}}{|x|^{1+Y}}\In_{x<0}+ \frac{e^{-Mx}}{|x|^{1+Y}}\In_{x>0} \biggl],
\]
where $C,G,M > 0$, $0 < Y < 2$, see \cite{cgmy2002} for details on CGMY processes.
The characteristic function of $X_T$ reads as
\begin{align*}
	\mathcal{F}[\pi](u) = 
	\exp\bigl( {i}\mu u T + TC\Gamma(-Y)
	[(M - iu)^Y - M^Y + (G + iu)^Y - G^Y] \bigr),
\end{align*}
where the drift $\mu\in\R^d$ is given for some $r>0$ by
\[
	\mu = r - C\Gamma(-Y)[(M - 1)^Y - M^Y + (G + 1)^Y - G^Y] .
\]
Suppose that the stock prices follow the model
\[
	S^k_t = e^{X^k_t}, \quad k=1,\ldots,d,
\]
where $X^k_t$ are independent CGMY processes.
Let $g(x)$ be the payoff function for the put option on the maximum of $d$ assets, i.e.,
\[
g(x) = (K  - e^{x_1} \vee\ldots\vee e^{x_d})^{+}.
\]
Our goal is to compute the price of the European put option which is given by
$
	V = e^{-rT} \E\bigl[g(X^1_T,\ldots,X^d_T)\bigr].
$
Application of the Parseval's formula with damping the growth of $g(x)$ by $e^{\langle x,R \rangle}$ for some vector $R\in \mathbb{R}^d$
leads to the formula
\begin{equation*}
V=\frac{e^{-rT}}{(2\pi)^d}\int_{\mathbb{R}^d}\mathcal{F}[g]({i}R-u)\mathcal{F}[\pi](u-{i}R)\, du.
\end{equation*}
To ensure the finiteness of $\mathcal{F}[\pi](u-{i}R)$, we need to choose $R$ such that 
its coordinates satisfy $-G<R_k<0$, $k=1,\ldots,d$.
The authors in \cite{belomestny16} propose to use importance sampling strategy with the following representation 
\[
	V = \frac{e^{-rT}}{(2\pi)^d} \,\mathsf{E}_{X\sim q}\left[\mathcal{F}[g]({i} R-X)\frac{\mathcal{F}[\pi](X-{i}R)}{|q(X)|}\right],
	\quad\text{where}\quad
	q(u) = \frac{1}{2\theta^{1/\alpha}\Gamma(1+1/\alpha)} e^{-\frac{|u|^\alpha}{\theta}}
\]
for some parameters $\alpha, \theta>0$.
Our goal is to compare this approach with the Fourier transform MCMC strategy
\begin{equation*}
	V = \frac{C_p e^{-rT}}{(2\pi)^d} \,\mathsf{E}_{X\sim p}\left[\mathcal{F}[g]({i} R-X)\frac{\mathcal{F}[\pi](X-{i}R)}{|\mathcal{F}[\pi](X-{i}R)|}\right],
\end{equation*}
where \(p(u)\propto |\mathcal{F}[\pi]({i}R-u)|\). 
The Fourier transform of the payoff function $g(x)$  
is given by
\begin{align*}
	\mathcal{F}[g]({i}R-u) &= \frac{(-1)^{d+1}K^{1-\sum_{k=1}^d (R_k-{i}u_k)}}{(\sum_{k=1}^d (R_k-{i}u_k) - 1)\prod_{k=1}^d(R_k+{i}u_k)},
\end{align*}
see \cite[Appendix A]{egp10} for the proof.
\par
We take $C = 1$, $G = 5$, $M = 5$, $Y = 0.5$, $r = 0.1$, $S_0 =100$, $K =100$, 
$T = 1$, $R=-(1.5,\ldots,1.5)$, and compare importance sampling (IS) with Metropolis-Hastings Random Walk (MHRW)
by generating $100$ independent trajectories of size $n = 10\,000$ and computing estimates of $V$ for 
each of them. We use the normal density for both importance sampling density 
(this corresponds to $\alpha=2$, $\theta=2$) and proposal density in MHRW.
The burn-in period for MHRW is $N=5\,000$. 
The spread of the estimates is presented in Figure \ref{fig:is1}.
\begin{figure}[htb]
\includegraphics[width=0.245\linewidth]{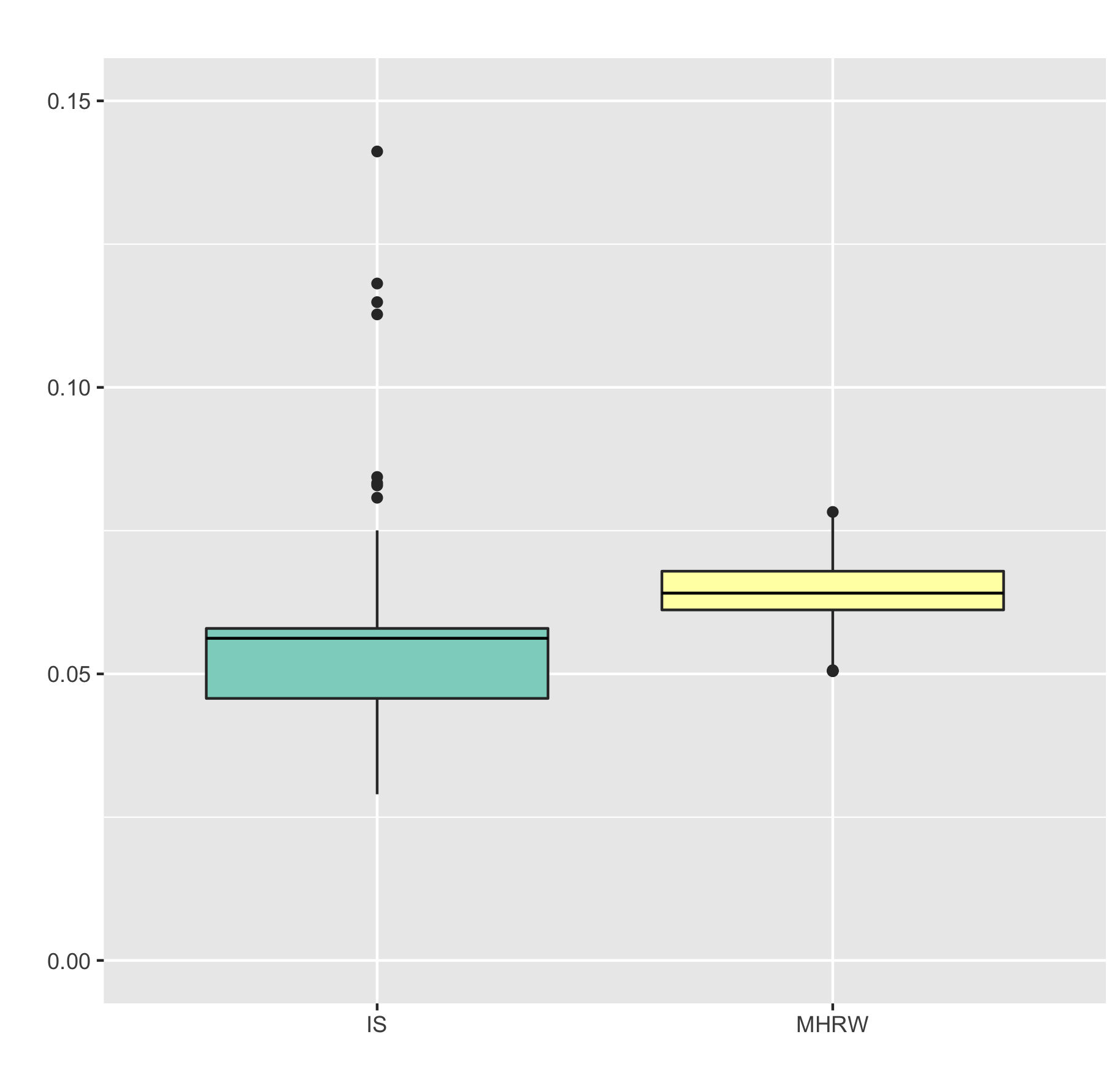}
\includegraphics[width=0.245\linewidth]{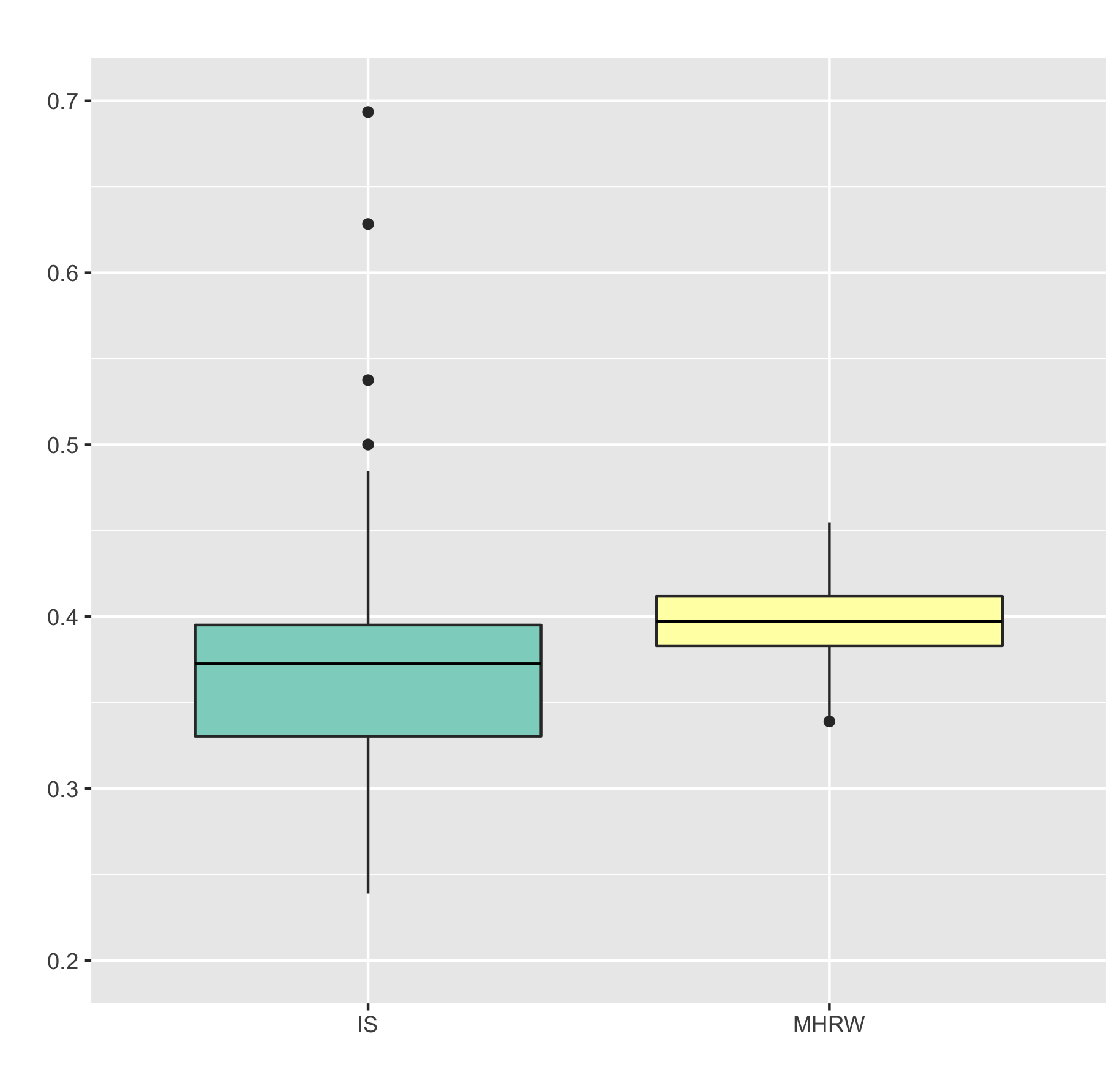}
\includegraphics[width=0.245\linewidth]{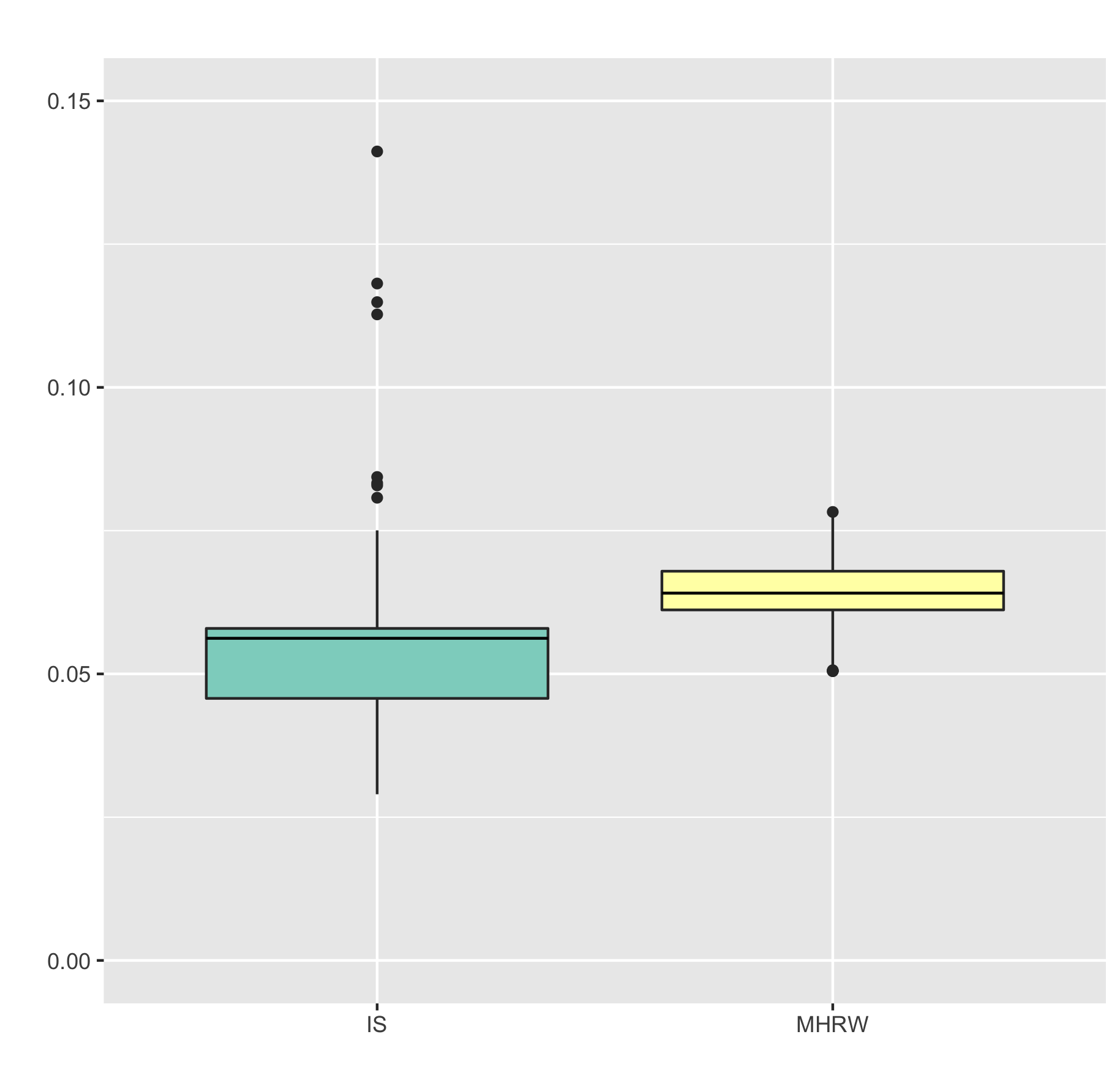}
\includegraphics[width=0.245\linewidth]{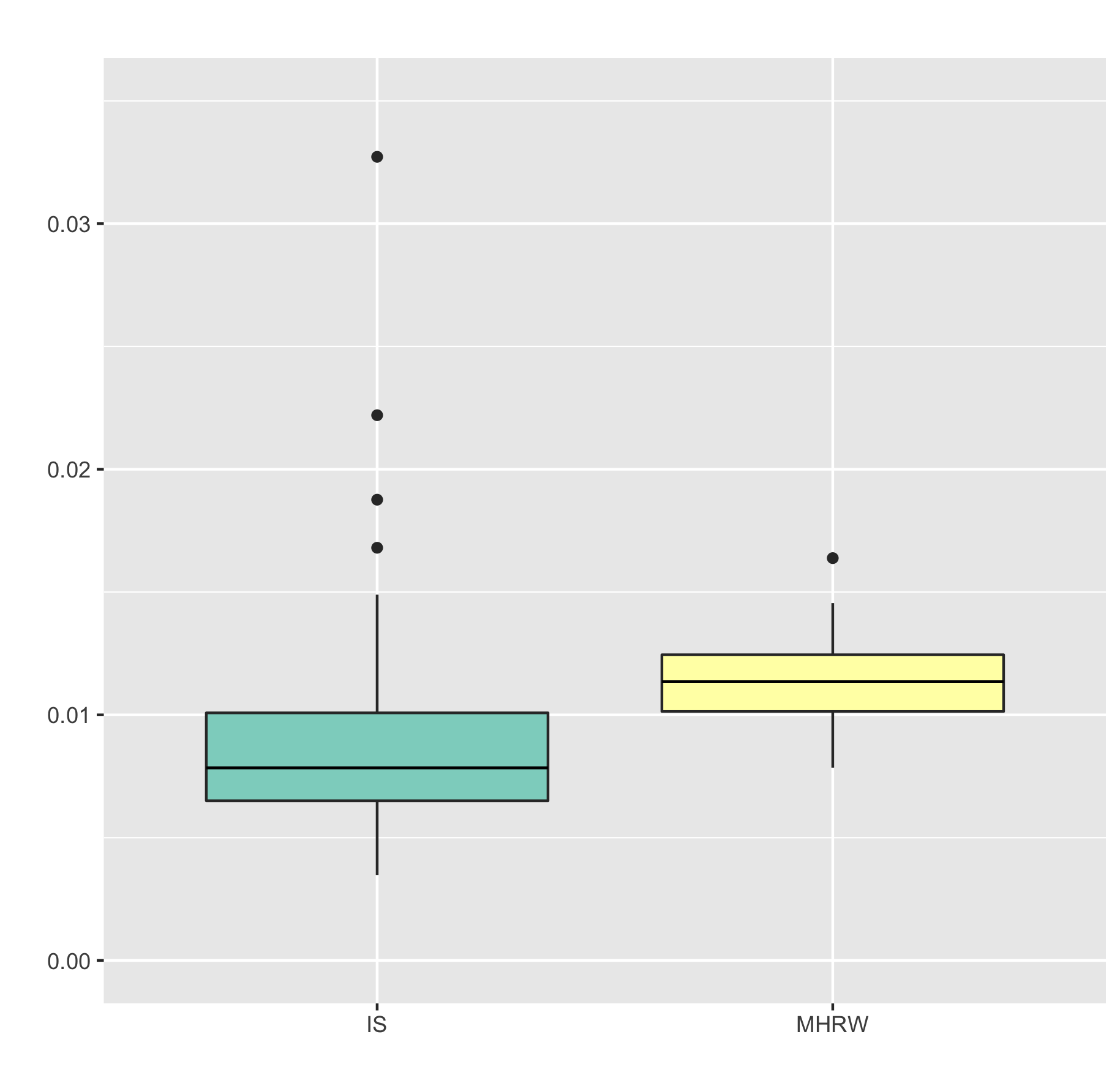}
\caption{\label{fig:is1} Put option on the maximum of $d$ assets in CGMY model for $d=2, 4, 6$, and $8$.}
\end{figure}

\section{Discussion}
\label{sec:disc}
We proposed a novel MCMC  methodology for the computation of expectation with respect to distributions with analytically known Fourier transforms. The proposed approach is rather general and can be also used in combination with importance sampling as a variance reduction method. As compared to the  MC method in spectral domain, our approach requires only generation of simple random variables and therefore is computationally more efficient.  Finally let us note that our methodology may also be useful in the case of heavy tailed distributions with analytically known Fourier transforms.

\section*{References}
\bibliographystyle{elsarticle-harv} 
\bibliography{Fourier_MCMC}
%
%
\end{document}